\title[Reducible connections and non-local symmetries]
{Reducible connections and non-local symmetries of the self-dual Yang--Mills equations}
\author{James~D.E.\ Grant}
\date{22 December, 2009}
\numberwithin{equation}{section}
\theoremstyle{plain}
\newtheorem{theorem}{Theorem}[section]
\newtheorem{proposition}[theorem]{Proposition}
\theoremstyle{definition}
\newtheorem{definition}[theorem]{Definition}
\newtheorem{example}[theorem]{Example}
\newtheorem*{notation}{Notation}
\theoremstyle{remark}
\newtheorem{remark}[theorem]{Remark}
\newcommand{\sdyme}{self-dual Yang--Mills equations}
\newcommand{\cpone}{\ensuremath{\mathbb{C}P^1}}
\newcommand{\cptwo}{\ensuremath{\mathbb{C}P^2}}
\newcommand{\cpthree}{\ensuremath{\mathbb{C}P^3}}
\newcommand{\Id}{\ensuremath{\mathrm{Id}}}
\newcommand{\Vzeroeps}{\ensuremath{\mathcal{V}^0_{\epsilon}}}
\newcommand{\Vinfeps}{\ensuremath{\mathcal{V}^{\infty}_{\epsilon}}}
\newcommand{\Veps}{\ensuremath{\mathcal{V}_{\epsilon}}}
\newcommand{\SLtwoC}{\ensuremath{\mathrm{SL}_2(\mathbb{C})}}
\newcommand{\sltwoC}{\ensuremath{\mathfrak{sl}_2(\mathbb{C})}}
\newcommand{\PSLtwoC}{\ensuremath{\mathrm{PSL}_2(\mathbb{C})}}
\newcommand{\SLtwoR}{\ensuremath{\mathrm{SL}_2(\mathbb{R})}}
\newcommand{\PSLtwoR}{\ensuremath{\mathrm{PSL}_2(\mathbb{R})}}
\newcommand{\SUtwo}{\ensuremath{\mathrm{SU}_2}}
\newcommand{\sutwo}{\ensuremath{\mathfrak{su}_2}}
\newcommand\ubar{\ensuremath{\overline{u}}}
\newcommand\vbar{\ensuremath{\overline{v}}}
\newcommand\bracket[2]{\ensuremath{\left\{ #1 \, \left| \vphantom{|^|} \, #2 \right.\right\}}}
\newcommand{\vs}{\vskip .2cm}
\begin{document}
\dedicatory{To David~E.\ Williams}
\thanks{This work was supported by START-project Y237--N13 of the Austrian Science Fund and a Visiting Professorship at the University of Vienna. The author is grateful to Prof.~M.A.\ Guest for discussions and, in particular, for drawing his attention to harmonic maps of finite type. He would also like to thank the anonymous referee for a detailed critique of the original version of this paper.}
\address{\href{http://www.mat.univie.ac.at/home.php}{Fakult{\"a}t f{\"u}r Mathematik} \\
\href{http://www.univie.ac.at/de/}{Universit{\"a}t Wien} \\ Nordbergstrasse 15 \\ 1090 Wien \\ Austria}
\email{\href{mailto:james.grant@univie.ac.at}{james.grant@univie.ac.at}}

\begin{abstract}
We construct the most general reducible connection that satisfies the self-dual Yang--Mills equations on a simply-connected, open subset of flat $\mathbb{R}^4$. We show how all such connections lie in the orbit of the flat connection on $\mathbb{R}^4$ under the action of non-local symmetries of the self-dual Yang--Mills equations. Such connections fit naturally inside a larger class of solutions to the self-dual Yang--Mills equations that are analogous to harmonic maps of finite type.
\end{abstract}
\maketitle
\thispagestyle{empty}

\section{Introduction}

Reducible connections play an important r\^{o}le in Donaldson's study of four-manifold topology~\cite{Don1}. In particular, the singular ends of the moduli space of global solutions of the {\sdyme} on a four-manifold are due to the existence of connections on which the group of gauge transformations (modulo its centre) do not act freely. In the current paper, we study reducible connections from a different point of view, that of integrable systems theory.

In this paper and its companion~\cite{sdym} we investigate the non-local symmetry algebra of the {\sdyme} on $\mathbb{R}^4$ discussed in~\cite{CGSW,CGW,CW} and corresponding group actions on spaces of solutions of the {\sdyme} on open subsets of $\mathbb{R}^4$. In~\cite{sdym} we studied instanton moduli spaces, as explicitly described by the ADHM construction~\cite{At1,ADHM}, and group actions that preserved the $L^2$ nature of the curvature of the connection. In the current work, we investigate reducible connections defined on a simply-connected, open subset of $\mathbb{R}^4$. Given that reducible and irreducible connections play a different r\^{o}le in Donaldson's work, our main motivation was to study whether such connections have different properties from an integrable systems point of view. This does, indeed, seem to be the case.

In the case of instanton solutions on $\mathbb{R}^4$, it was argued in~\cite{sdym} that the symmetry group that acts on the moduli space has orbits of high codimension in the moduli space. (In other words, the orbits are quite small.) Our conclusion for reducible connections, however, are quite different. After explicitly constructing the most general reducible self-dual connection on a simply-connected, open subset of $\mathbb{R}^4$ (there are no reducible self-dual connections on $\mathbb{R}^4$ with $L^2$ curvature), we deduce that all reducible connections lie in the orbit of the flat connection on $\mathbb{R}^4$ under the action of the non-local symmetry group of the {\sdyme}. Also, the reducible connections lie within a larger class of solutions that arise quite naturally from the symmetries of the {\sdyme}. Solutions in this larger family are determined by a holomorphic function, $T$, defined on an open subset of {\cpthree} that obeys the condition $\left[ T, T^* \right] = 0$. Such formulae bear a strong resemblance to those arising in the theory of harmonic maps of finite type (see, e.g., \cite[Chapter~24]{Gu}). This result is distinct from the instanton case discussed in~\cite{sdym}, which bears more of a resemblance to the theory of harmonic maps of finite uniton number~\cite{UhlenbeckJDG}.

\vs
The organisation of this paper is as follows. In the following Section, we set up notation and recall the non-local symmetries of the {\sdyme} on $\mathbb{R}^4$ constructed in~\cite{CGSW,CGW,CW}. We also recall the main results of~\cite{Cr} concerning the twistorial interpretation of these symmetries in terms of their action on the patching matrix of holomorphic bundles over open subsets of {\cpthree}. In Section~\ref{sec:reducible}, we determine the most general reducible self-dual connection on a simply-connected, open subset of $\mathbb{R}^4$. We show that these may be constructed directly from harmonic functions. We then show the patching matrix of such connections may be constructed directly. In Section~\ref{flatorbit}, we deduce that all such patching matrices, and therefore all reducible self-dual connections, lie in the orbit of the flat connection on $\mathbb{R}^4$. In particular, we see there is a larger class of patching matrices that appear quite naturally from the group action of~\cite{Cr} that contains all reducible connections. Analogies between this larger class of solutions and harmonic maps of finite type are briefly investigated in Section~\ref{sec:HMFT}. After some final remarks, in an Appendix we study some properties of a simplified version of the group action of~\cite{Cr}.

\vs
As in the companion paper~\cite{sdym}, we study only the non-local symmetries of the {\sdyme} constructed in~\cite{CGSW,CGW,CW}, and not symmetries that require the existence of a non-trivial conformal group on our manifold. We also specialise to the case of {\SUtwo} structure group, although the generalisation to any classical Lie group is straightforward.

\section{Preliminaries}

\subsection{The self-dual Yang--Mills equations on $\mathbb{R}^4$}
Let $U$ be a connected, simply connected open subset of $\mathbb{R}^4$ with its flat metric. From the Cartesian coordinates $(t, x, y, z)$ on $\mathbb{R}^4$, we define complex coordinates
\[
u := t + i x, \qquad
v := y - i z
\]
on $\mathbb{R}^4 \cong \mathbb{C}^2$. In terms of these coordinates, the metric on $\mathbb{R}^4$ is
\[
\mathbf{g} = \frac{1}{2} \left( du \otimes d{\ubar} + d{\ubar} \otimes du +
dv \otimes d{\vbar} + d{\vbar} \otimes dv \right).
\]
and the standard volume form is
\[
\epsilon = dt \wedge dx \wedge dy \wedge dz =
\frac{1}{4} du \wedge d{\ubar} \wedge dv \wedge d{\vbar}.
\]
In terms of these coordinates, a local basis for the bundle of anti-self-dual two-forms (with respect to the above metric and volume form) on $U$ is given by $\{ du \wedge dv, \frac{1}{2} \left( du \wedge d\ubar + dv \wedge d\vbar \right), d\ubar \wedge d\vbar \}$.

Let $\pi\colon P \rightarrow U$ be a principal {\SUtwo} bundle over $U$. Since we are, essentially, working locally, a connection on $P$ may be represented by an {\sutwo}-valued one-form $\mathbf{A} \in \Omega^1(U, \sutwo)$, with curvature $\mathbf{F} \in \Omega^2(U, \sutwo)$. In terms of the complex coordinates above, the connection satisfies the self-dual Yang--Mills equations on $U$ if and only if
\begin{subequations}
\begin{gather}
F_{uv} = 0,
\label{Fuv}
\\
F_{u{\ubar}} + F_{v{\vbar}} = 0,
\label{Fuubar}
\\
F_{{\ubar}{\vbar}} = 0.
\label{Fubarvbar}
\end{gather}\label{sdym}\end{subequations}

Since $U$ is simply-connected, \eqref{Fuv} and~\eqref{Fubarvbar} imply the existence of maps $\psi_0, \psi_{\infty}\colon U \rightarrow \SLtwoC$ with the property that
\begin{subequations}
\begin{gather}
A_u = - \left( \partial_u \psi_{\infty} \right) \psi_{\infty}^{-1}, \qquad
A_v = - \left( \partial_v \psi_{\infty} \right) \psi_{\infty}^{-1}, \\
A_{\ubar} = - \left( \partial_{\ubar} \psi_0 \right) \psi_0^{-1}, \qquad
A_{\vbar} = - \left( \partial_{\vbar} \psi_0 \right) \psi_0^{-1}.
\end{gather}\label{Acomponents}\end{subequations}
The fields $\psi_0, \psi_{\infty}$ are determined by equations~\eqref{Acomponents} up to transformations
\[
\psi_0(x) \mapsto {\widetilde\psi}_0(x) := \psi_0(x) R(u, v), \qquad
\psi_{\infty}(x) \mapsto {\widetilde\psi}_{\infty}(x) := \psi_{\infty}(x) S(\ubar, \vbar),
\]
where $R$, $S$ are arbitrary analytic functions of $(u, v)$, $(\ubar, \vbar)$ respectively. We may use this freedom to set, without loss of generality, $\psi_{\infty}(x) = \left( \psi_0(x)^{-1} \right)^{\dagger}, \forall x \in U$. The remaining freedom in the choice of $\psi_0, \psi_{\infty}$ is then of the form
\begin{equation}
\psi_0(x) \mapsto {\widetilde\psi}_0(x) := \psi_0(x) R(u, v), \qquad
\psi_{\infty}(x) \mapsto {\widetilde\psi}_{\infty}(x) := \psi_{\infty}(x) \left( R(u, v)^{-1} \right)^{\dagger}.
\label{psifreedom}
\end{equation}
In terms of these fields we define the \emph{Yang $J$-function}, $J\colon U \rightarrow \SLtwoC$, by
\[
J(x) := \psi_{\infty}^{-1}(x) \cdot \psi_0(x), \qquad x \in U.
\]
It follows from the reality properties of $\psi_0, \psi_{\infty}$ that $J$ is Hermitian
\[
J(x) = J(x)^{\dagger}, \qquad x \in U,
\]
and that, under the transformation~\eqref{psifreedom}, $J$ transforms according to the rule
\begin{equation}
J(x) \mapsto {\widetilde J}(x) := R(u, v)^{\dagger} J(x) R(u, v).
\label{Jtransformation}
\end{equation}

Substituting into equation~\eqref{Fuubar}, we find that the connection, $\mathbf{A}$, satisfies the {\sdyme} if and only if $J$ satisfies the two (equivalent) versions of the \emph{Yang--Pohlmeyer equation}
\begin{subequations}
\begin{gather}
\partial_u \left( J_{\ubar} J^{-1} \right) + \partial_v \left( J_{\vbar} J^{-1} \right) = 0,
\label{YP1}
\\
\partial_{\ubar} \left( J^{-1} J_u \right) + \partial_{\vbar} \left( J^{-1} J_v \right) = 0.
\label{YP2}
\end{gather}\label{Jeqn}\end{subequations}

\subsection{Associated linear problem}
Let $\Omega$ be a non-empty, open subset of $\cpone := \mathbb{C} \cup \{ \infty \}$, and consider the following overdetermined system of equations for a map $\Psi\colon U \times \Omega \rightarrow \SLtwoC$
\begin{subequations}
\begin{align}
\left( \partial_{\vbar} + z \partial_u \right) \Psi(x, z) &=
- \left( A_{\vbar} + z A_u \right) \Psi(x, z),
\\
\left( \partial_{\ubar} - z \partial_v \right) \Psi(x, z) &=
- \left( A_{\ubar} - z A_v \right) \Psi(x, z),
\\
\partial_{\overline{z}} \Psi(x, z) &= 0.
\end{align}\label{ALP}\end{subequations}
An important property of the {\sdyme} is that they are the integrability condition for this system. In particular, if the connection $\mathbf{A}$ satisfies the {\sdyme} on $U$, then there exists $\epsilon > 0$ and a solution $\Psi_0\colon U \times \Vzeroeps \rightarrow \SLtwoC$ of this system that is analytic in $z$ for $z \in \Vzeroeps$, where $\Vzeroeps := \bracket{z \in \cpone}{|z| < 1 + \epsilon}$.

\begin{notation}
We define the involution $\sigma\colon \cpone \rightarrow \cpone$ by $\sigma(z) = - \left. 1 \right/\!\overline{z}$. Given a subset $\mathcal{V} \subset \cpone$ and a map $g\colon \mathcal{V} \rightarrow \SLtwoC$, we define a corresponding map $g^*\colon \sigma(\mathcal{V}) \rightarrow \SLtwoC$ by
\[
g^*(z) := \left( g(\sigma(z)) \right)^{\dagger}.
\]
Similarly, given any map $f\colon U \times \mathcal{V} \rightarrow \SLtwoC$, we define a corresponding map $f^*\colon U \times \sigma(\mathcal{V}) \rightarrow \SLtwoC$ by
\[
f^*(x, z) := \left( f(x, \sigma(z)) \right)^{\dagger}.
\]
\end{notation}

Given the solution, $\Psi_0$, of~\eqref{ALP}, we may now construct another solution $\Psi_{\infty}\colon U \times \Vinfeps \rightarrow \SLtwoC$, where $\Vinfeps := \bracket{z \in \cpone}{|z| > \frac{1}{1+\epsilon}}$, by $\Psi_{\infty}(x, z) := \Psi_0^*(x, z)^{-1}$. The solution $\Psi_{\infty}$ is analytic in $z$ for $z \in \Vinfeps$.

\begin{remark}
Note that, for the construction of the connection in equation~\eqref{Acomponents}, we may take $\psi_0(x) := \Psi_0(x, 0)$ and $\psi_{\infty}(x) := \Psi_{\infty}(x, 0)$. We will assume, from now on, that $\psi_0$ and $\psi_{\infty}$ are defined in this way.
\end{remark}

\begin{definition}
The \emph{patching matrix} (or \emph{clutching function}, in Uhlenbeck's terminology~\cite{UhlenbeckJDG}), $G\colon U \times \Veps \rightarrow \SLtwoC$ is defined by
\begin{equation}
G(x, z) = \Psi_{\infty}(x, z)^{-1} \cdot \Psi_0(x, z),
\label{patching}
\end{equation}
where $\Veps := \Vzeroeps \cap \Vinfeps = \bracket{z \in \cpone}{\frac{1}{1+\epsilon} < |z| < 1+\epsilon}$.
\end{definition}

\begin{remark}
Viewing $U \times \Veps$ as a subset of $\pi^{-1}(U) \subseteq \cpthree$, the patching matrix is the transition function of the holomorphic vector bundle over {\cpthree} corresponding to our self-dual connection $\mathbf{A}$~\cite{AHS,Wa}. Since $U \times \Vzeroeps$ and $U \times \Vinfeps$ are open subsets of $\mathbb{C}^3$, any holomorphic bundle over them is trivial. As such, the bundle over $\pi^{-1}(U)$ is completely determined by the patching matrix $G$ (see, e.g., \cite{Cr}). The fact that the patching matrix splits as in~\eqref{patching} implies that the bundle is trivial on restriction to a line $\pi^{-1}(x)$, for each $x \in U$~\cite{Wa}. Since the patching matrix obeys the reality condition $G(t, z) = G^*(t, z)$, the bundle admits a Hermitian structure, and the corresponding self-dual connection is an {\SUtwo} connection, rather than an {\SLtwoC} connection.
\end{remark}

\subsection{Non-local symmetries}
In order to study symmetries of the {\sdyme}, we let $J(\cdot, s)\colon U_s \rightarrow \SLtwoC$ be a one-parameter family of solutions of the Yang--Pohlmeyer equations~\eqref{Jeqn}. Here, $s \in I$ with $I$ an open interval in $\mathbb{R}$ containing the origin, $J$ is assumed to depend in a $C^1$ fashion on the parameter $s$, and $U_s \subseteq \mathbb{R}^4$ is the open subset of $\mathbb{R}^4$ on which the solution is well defined (i.e. non-singular)\footnote{We will often notationally suppress the dependence of the domain $U$ on $s$.}. Taking the derivative with respect to $s$ of~\eqref{Jeqn}, we find that $J(\cdot, s)$ must satisfy the linearised equation
\begin{equation}
\partial_u \left( J \partial_{\ubar} \left( J^{-1} \dot{J} \right) J^{-1} \right) +
\partial_v \left( J \partial_{\vbar} \left( J^{-1} \dot{J} \right) J^{-1} \right) = 0.
\label{linearisation}
\end{equation}
where $\dot{J} := \left. \partial J \right/\! \partial s$. It is known that the only local symmetries of the {\sdyme} on flat $\mathbb{R}^4$ are gauge transformations and those generated by the action of the conformal group (see, e.g.,~\cite{Po}). However, there exists a non-trivial family of non-local symmetries of the {\sdyme}~\cite{CGSW,CGW,CW}, defined as follows. We define maps $\chi_0\colon U \times \Vzeroeps \rightarrow \SLtwoC$, $\chi_{\infty}\colon U \times \Vinfeps \rightarrow \SLtwoC$ by the relations
\begin{align*}
\chi_0(x, z) &:= \psi_0(x)^{-1} \cdot \Psi_0(x, z), & &(x, z) \in U \times \Vzeroeps
\\
\chi_{\infty}(x, z) &:= \psi_{\infty}(x)^{-1} \cdot \Psi_{\infty}(x, z),  & &(x, z) \in U \times \Vinfeps.
\end{align*}
$\chi_0$ is analytic in $z$ for all $z \in V_0$, with $\chi(x, 0) = \Id$, for all $x \in U$, and is a solution of the system
\begin{align*}
\left[ \left( \partial_{\vbar} - J_{\vbar} J^{-1} \right) + z \partial_u \right] \chi_0(x, z) &= 0,
\\
\left[ \left( \partial_{\ubar} - J_{\ubar} J^{-1} \right) - z \partial_v \right] \chi_0(x, z) &= 0,
\end{align*}
for all $(x, z) \in U \times \Vzeroeps$. Similarly, $\chi_{\infty}$ is analytic in $z$ for all $z \in \Vinfeps$, with $\chi_{\infty}(x, \infty) = \Id$, for all $x \in U$. Note that we have
\[
\chi_{\infty}(x, \lambda) = \left( \chi_0(x, \sigma(\lambda)) \right)^{-\dagger}, \mbox{ for all $(x, \lambda) \in U \times \Vinfeps$.}.
\]

\vs
Based on the work of~\cite{CGSW,CGW,CW}, we have the following result from~\cite{Cr}:

\begin{proposition}
\label{prop:Jsymmetry}
Let $T\colon U \times \Veps \rightarrow \sltwoC$ obey the relations
\[
\left( \partial_{\vbar} + \lambda \partial_u \right) T(x, \lambda) =
\left( \partial_{\ubar} - \lambda \partial_v \right) T(x, \lambda) = 0,
\]
and be analytic in $\lambda$ on a neighbourhood, {\Veps}, of the unit circle $S^1 \subset \mathbb{C}$. Then
\begin{align}
\dot{J}(x, s) &= \chi_{\infty}(x, \lambda) T(x, \lambda) \chi_{\infty}(x, \lambda)^{-1} \cdot J
\nonumber\\
&\hskip3cm + J \cdot \chi_0(x, \sigma(\lambda)) T(x, \lambda)^{\dagger} \chi_0(x, \sigma(\lambda))^{-1}
\nonumber
\\
&= \psi_{\infty}(x)^{-1}
\left[ \Psi_{\infty}(x, \lambda) T(x, \lambda) \Psi_{\infty}(x, \lambda)^{-1} \right.
\nonumber\\
&\hskip3cm \left.+ \Psi_0(x, \sigma(\lambda)) T(x, \lambda)^{\dagger} \Psi_0(x, \sigma(\lambda))^{-1} \right] \psi_0(x)
\label{Jsymmetry}
\end{align}
is a solution of the linearisation equation~\eqref{linearisation}, for all $x \in U$, and all $\lambda \in \Veps$.
\end{proposition}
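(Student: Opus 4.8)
The statement is that the right-hand side of~\eqref{Jsymmetry} solves the linearisation~\eqref{linearisation}. I would prove this by constructing, alongside the candidate $\dot J$, a corresponding infinitesimal variation of the linear system~\eqref{ALP} and checking directly that it is consistent. Concretely, I would seek $\dot\Psi_0, \dot\Psi_\infty$ such that differentiating the splitting $J = \psi_\infty^{-1}\psi_0$ and the relations~\eqref{Acomponents} reproduces~\eqref{Jsymmetry}, and such that the varied $\Psi_0,\Psi_\infty$ still satisfy~\eqref{ALP} to first order. The key point is that~\eqref{linearisation} is nothing but the linearisation of the integrability condition~\eqref{Fuubar} of~\eqref{ALP}, so exhibiting a compatible infinitesimal deformation of the linear system automatically yields a solution of~\eqref{linearisation}.

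\textbf{Key steps.} First, recall that the gauge potential is recovered from $\Psi_0,\Psi_\infty$ via $A_u = -(\partial_u\psi_\infty)\psi_\infty^{-1}$ etc., so the infinitesimal change in $\mathbf{A}$ is governed by $\delta\psi_0 := \dot\psi_0$, $\delta\psi_\infty := \dot\psi_\infty$, and hence by $\dot J = -\psi_\infty^{-1}\dot\psi_\infty\,\psi_\infty^{-1}\psi_0 + \psi_\infty^{-1}\dot\psi_0 = \psi_\infty^{-1}\bigl(\dot\psi_0\psi_0^{-1} - \dot\psi_\infty\psi_\infty^{-1}\bigr)\psi_0$. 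Comparing with the second form of~\eqref{Jsymmetry}, I would make the natural ansatz
\begin{align*}
\dot\psi_0\psi_0^{-1} &= \Psi_0(x,\sigma(\lambda))\,T(x,\lambda)^\dagger\,\Psi_0(x,\sigma(\lambda))^{-1},
\\
\dot\psi_\infty\psi_\infty^{-1} &= -\Psi_\infty(x,\lambda)\,T(x,\lambda)\,\Psi_\infty(x,\lambda)^{-1},
\end{align*}
or rather recognise that these are the $z=0$ and $z=\infty$ evaluations of a single section. Second, I would verify that the hypotheses on $T$ — analyticity on $\Veps$ together with $(\partial_{\vbar}+\lambda\partial_u)T = (\partial_{\ubar}-\lambda\partial_v)T = 0$ — are exactly what is needed for the fields $M_\infty(x,\lambda) := \Psi_\infty T \Psi_\infty^{-1}$ and $M_0(x,\sigma(\lambda)) := \Psi_0 T^\dagger \Psi_0^{-1}$ to be, respectively, holomorphic in $z$ on $\Vinfeps$-type and $\Vzeroeps$-type domains after using~\eqref{ALP} to control the $\partial_u,\partial_v,\partial_{\ubar},\partial_{\vbar}$ derivatives: indeed $(\partial_{\vbar}+\lambda\partial_u)M_\infty = [\,-(A_{\vbar}+\lambda A_u), M_\infty\,] + \Psi_\infty\bigl((\partial_{\vbar}+\lambda\partial_u)T\bigr)\Psi_\infty^{-1}$, and the last term vanishes by hypothesis. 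Third, a Birkhoff/Liouville-type splitting argument (the maps are bounded and holomorphic on annuli, hence decompose into pieces holomorphic at $0$ and at $\infty$) produces the individual variations $\dot\psi_0,\dot\psi_\infty$; this is precisely the mechanism already implicit in the relation between $\chi_0,\chi_\infty$ and $\Psi_0,\Psi_\infty$. Finally, I would substitute back: since $\dot{\mathbf{A}}$ so constructed is an infinitesimal deformation through solutions of~\eqref{ALP}, its integrability forces~\eqref{Fuubar} to hold to first order, which is~\eqref{linearisation}. The equivalence of the two displayed forms of $\dot J$ is then just the definitions $\chi_\infty = \psi_\infty^{-1}\Psi_\infty$, $\chi_0 = \psi_0^{-1}\Psi_0$ together with $\chi_\infty(x,\lambda) = \chi_0(x,\sigma(\lambda))^{-\dagger}$ and the reality $\psi_\infty = (\psi_0^{-1})^\dagger$.

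\textbf{Main obstacle.} The delicate point is not the formal computation — expanding $\partial_u(J\partial_{\ubar}(J^{-1}\dot J)J^{-1}) + \partial_v(J\partial_{\vbar}(J^{-1}\dot J)J^{-1})$ and collecting terms is routine once the structure is right — but rather justifying the splitting step, i.e. that the globally defined, $z$-holomorphic-on-an-annulus expression $M_\infty(x,\lambda) + M_0(x,\sigma(\lambda))$ decomposes into a part extending holomorphically over $\Vzeroeps$ and a part extending over $\Vinfeps$, uniformly in $x \in U$, so that $\dot\psi_0$ and $\dot\psi_\infty$ are genuinely well-defined $\SLtwoC$-valued (infinitesimally, $\sltwoC$-valued) fields on $U$. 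This is where the analyticity of $T$ on a \emph{neighbourhood} of $S^1$, and not merely on $S^1$, is essential, and where one invokes the non-singularity of $J$ on $U$ to keep the splitting factors bounded. Alternatively, one can bypass the splitting by working throughout with $\chi_0,\chi_\infty$, whose defining PDEs are stated in the excerpt, and verifying~\eqref{linearisation} by direct substitution using $\partial_{\vbar}\chi_0 = (J_{\vbar}J^{-1})\chi_0 - \lambda\partial_u\chi_0$ and its companion; I would present whichever route keeps the bookkeeping shortest, most likely the direct $\chi$-substitution, reserving the $\Psi$-splitting picture for motivation.
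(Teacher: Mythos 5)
The paper does not actually prove Proposition~\ref{prop:Jsymmetry}: it is quoted from Crane~\cite{Cr} (building on \cite{CGSW,CGW,CW}) with no proof supplied, so there is no in-paper argument to compare yours against line by line. Judged on its own terms, your second route --- direct substitution using the auxiliary systems for $\chi_0,\chi_\infty$ --- is the standard argument and is correct. Concretely, writing $M(x):=\chi_\infty(x,\lambda)T(x,\lambda)\chi_\infty(x,\lambda)^{-1}$ and using that $T$ is annihilated by $\partial_{\vbar}+\lambda\partial_u$ and $\partial_{\ubar}-\lambda\partial_v$ together with the linear system satisfied by $\chi_\infty$, one gets $(\partial_{\vbar}+\lambda\partial_u)M=[J_{\vbar}J^{-1},M]$ and $(\partial_{\ubar}-\lambda\partial_v)M=[J_{\ubar}J^{-1},M]$; for $\dot J=MJ$ this gives $J\partial_{\ubar}(J^{-1}\dot J)J^{-1}=\lambda\,\partial_v M$ and $J\partial_{\vbar}(J^{-1}\dot J)J^{-1}=-\lambda\,\partial_u M$, so \eqref{linearisation} reduces to equality of mixed partials. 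The term $J\cdot\chi_0(x,\sigma(\lambda))T^{\dagger}\chi_0(x,\sigma(\lambda))^{-1}$ is handled by the mirror computation (or by daggering, using $J=J^{\dagger}$), and linearity of \eqref{linearisation} finishes the proof; the equality of the two displayed forms is, as you say, just unwinding the definitions of $\chi_0,\chi_\infty$.

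Two remarks. First, the ``main obstacle'' you identify --- the Birkhoff splitting of $M_\infty+M_0$ uniformly in $x$ --- is not needed for this proposition: $\lambda$ is a \emph{fixed} parameter and the claim is only that a given explicit expression solves a linear PDE. The splitting genuinely matters when one integrates the flow into a group action on patching matrices (the content of \eqref{Gdot} and Theorem~\ref{thm:Crane}), so your first route proves more than is asked and imports a difficulty that the statement does not require; you are right to relegate it to motivation. Second, a bookkeeping caution: the first term of \eqref{Jsymmetry} involves $\chi_\infty$, not $\chi_0$, so you need the companion linear system for $\chi_\infty$, which the paper does not display; deriving it from $\chi_\infty=(\chi_0\circ\sigma)^{-\dagger}$ and the hermiticity of $J$ is routine, but the placement of the potential terms ($J_{\ubar}J^{-1}$ versus $\lambda J^{-1}J_u$, etc.) differs between the $\chi_0$ and $\chi_\infty$ systems, and the system as printed in the paper should be rederived rather than taken verbatim before you substitute. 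Neither point is a gap in your plan, only in its execution details.
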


\begin{remark}
In the case where the function $T$ is independent of $(u, v)$, it defines an element of the loop group $\Lambda \SLtwoC$ with a holomorphic extension to an open neighbourhood of $S^1$ in $\mathbb{C}^*$. The algebra of symmetries generated by such $T$ is then isomorphic to the Kac-Moody algebra of {\sltwoC}~\cite{CGSW,CGW,CW}.
\end{remark}

The natural question is how to exponentiate the above algebra into a group action on the space of solutions of the {\sdyme}. A solution to this problem is given by the following result
\begin{theorem}\cite[Chapter~IV.C]{Cr}
\label{thm:Crane}
Let $g\colon X \times S^1 \rightarrow \SLtwoC$ be a smooth map that admits a continuous extension to a holomorphic map $g\colon X \times \Veps \subset \cpthree \rightarrow \SLtwoC$, for some $\epsilon > 0$. Then the action of $g$ on the patching matrix, $G(x, z)$, is defined by
\begin{equation}
G(x, z) \mapsto \left( g \cdot G \right) (x, z) := g(x, z) \cdot G(x, z) \cdot g^*(x, z).
\label{crane}
\end{equation}
This equation defines an action of the set of such maps $g$ on the space of self-dual connections on $X$.  If $g$ extends holomorphically to $z \in \Vzeroeps$, then the corresponding transformation is a holomorphic change of basis on the bundle over $\pi^{-1}(X)$, which leaves the self-dual connection, $\mathbf{A}$, unchanged.
\end{theorem}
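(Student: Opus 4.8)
The plan is to show that $\widetilde G := g\cdot G = g\,G\,g^*$ again has all the properties that characterise the patching matrix of a self-dual $\SUtwo$ connection on $X$, and then to invoke the correspondence between such matrices and connections. These properties, which hold modulo the equivalence $G\sim a^*\,G\,a$ with $a$ holomorphic and invertible on $X\times\Vzeroeps$, are: (i) holomorphy in $z$ on $X\times\Veps$; (ii) the twistor constraints $(\partial_{\vbar}+z\partial_u)G=(\partial_{\ubar}-z\partial_v)G=0$, a consequence of~\eqref{ALP}; (iii) the reality condition $G=G^*$; and (iv) for each $x$, a factorisation $G(x,\cdot)=\Psi_\infty(x,\cdot)^{-1}\Psi_0(x,\cdot)$ with $\Psi_0(x,\cdot)$, $\Psi_\infty(x,\cdot)$ holomorphic and invertible on $\Vzeroeps$, $\Vinfeps$ respectively.

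I would first note that the hypothesis on $g$ --- holomorphy as a map $X\times\Veps\subset\cpthree\to\SLtwoC$ --- translates, via the twistor fibration, into precisely: $g$ is holomorphic in $z$ on $X\times\Veps$ and $(\partial_{\vbar}+z\partial_u)g=(\partial_{\ubar}-z\partial_v)g=0$. A short computation with $\sigma$ (using $\partial_{\vbar}(B^\dagger)=(\partial_v B)^\dagger$, $\partial_u(B^\dagger)=(\partial_{\ubar}B)^\dagger$, $\sigma^2=\Id$, and $\sigma(\Veps)=\Veps$) then shows $g^*$ has the same two properties on $X\times\Veps$. Hence, by Leibniz, $\widetilde G=g\,G\,g^*$ inherits (i) and (ii), while (iii) follows from $(g\,G\,g^*)^*=(g^*)^*\,G^*\,g^*=g\,G\,g^*$, using $(g^*)^*=g$ and $G=G^*$. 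The group action statement reduces to $(g_1g_2)^*=g_2^*g_1^*$ and $\Id^*=\Id$: such maps are closed under products and inverses (for inverses, $\det g\equiv1$ and $(\partial_{\vbar}+z\partial_u)g^{-1}=-g^{-1}\bigl((\partial_{\vbar}+z\partial_u)g\bigr)g^{-1}=0$), $\Id\cdot G=G$, and $g_1\cdot(g_2\cdot G)=(g_1g_2)\,G\,(g_2^*g_1^*)=(g_1g_2)\cdot G$.

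The hard part is (iv), which I expect to be the main obstacle. For a given $x$, $\widetilde G(x,\cdot)$ need not admit such a factorisation --- i.e.\ need not lie in the big cell of $\Lambda\SLtwoC$ --- this being a Riemann--Hilbert problem whose solvability can fail on a real-codimension-one locus; but the big cell is open, so the set $\widetilde X\subseteq X$ where $\widetilde G=\widetilde\Psi_\infty^{-1}\widetilde\Psi_0$ factorises appropriately is open, which is why the new solution is in general defined only on a proper open subset and \eqref{crane} strictly gives an action by partially defined maps. The factorisation is unique up to $\widetilde\Psi_0\mapsto k\widetilde\Psi_0$, $\widetilde\Psi_\infty\mapsto k\widetilde\Psi_\infty$ with $k=k(x)$, the residual gauge freedom. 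To read off the connection on $\widetilde X$, put $\widetilde\psi_0:=\widetilde\Psi_0(\cdot,0)$, $\widetilde\psi_\infty:=\widetilde\Psi_\infty(\cdot,\infty)$ and use~\eqref{Acomponents}; by (ii), $-\bigl((\partial_{\vbar}+z\partial_u)\widetilde\Psi_0\bigr)\widetilde\Psi_0^{-1}$ is holomorphic on $\Vzeroeps$ and agrees on $\Veps$ with the analogous expression formed from $\widetilde\Psi_\infty$, which is holomorphic on $\Vinfeps$ with at worst a simple pole at $z=\infty$; the common function is therefore global on $\cpone$, $\sltwoC$-valued, and affine in $z$, hence equals $-(\widetilde A_{\vbar}+z\widetilde A_u)$, and likewise for the $(\partial_{\ubar}-z\partial_v)$ equation. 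Thus $\widetilde\Psi_0$ solves~\eqref{ALP}, whose integrability condition is the {\sdyme}, so $\widetilde{\mathbf A}$ is self-dual, and (iii) makes it an $\SUtwo$ connection.

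For the final assertion, suppose $g$ extends holomorphically over the patch $X\times\Vinfeps$; then $g^{-1}$ does too, and $g^*$, $(g^*)^{-1}$ extend over $X\times\Vzeroeps$. Taking $\widetilde\Psi_\infty:=\Psi_\infty g^{-1}$ and $\widetilde\Psi_0:=\Psi_0 g^*$, which are holomorphic and invertible over $X\times\Vinfeps$ and $X\times\Vzeroeps$ respectively, one gets $\widetilde\Psi_\infty^{-1}\widetilde\Psi_0=g\,G\,g^*=\widetilde G$ --- so $\widetilde G$ arises from $G$ by a holomorphic change of the two trivialisations, hence defines an isomorphic holomorphic bundle over $\pi^{-1}(X)$, and the isomorphism respects the Hermitian structure since the reality condition is preserved. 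Concretely, evaluating at $z=0$ and $z=\infty$ gives $\widetilde\psi_0=\psi_0 R$, $\widetilde\psi_\infty=\psi_\infty(R^{-1})^\dagger$ with $R:=g^*(\cdot,0)=g(\cdot,\infty)^\dagger$, which the twistor constraints on $g$ at $z=\infty$ render analytic in $(u,v)$: this is exactly the residual freedom~\eqref{psifreedom}, under which $\mathbf A$ is unchanged.
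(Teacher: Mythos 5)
The paper does not actually prove this theorem: it is quoted from Crane \cite[Chapter~IV.C]{Cr} without proof, so there is no internal argument to compare yours against. Your reconstruction is essentially the standard one and is correct in substance: the verification that $g^*$ satisfies the same twistor constraints as $g$ (via $\partial_{\vbar}(B^\dagger)=(\partial_v B)^\dagger$ and $1+\bar z\,\sigma(z)\cdot\bar z^{-1}\dots=0$, i.e.\ $1+\bar z w=0$ for $w=\sigma(z)$) is right; the reality and group-action computations are right; the acknowledgement that the Birkhoff factorisation can fail, so that \eqref{crane} is only an action by partially defined maps with shrinking domains, matches what the paper itself says later in Section~\ref{flatorbit}; and the Liouville argument recovering an affine-in-$z$ connection form from the two trivialisations is the standard Ward argument and is correct.

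Two points deserve attention. First, in the last paragraph you silently replace the hypothesis ``$g$ extends holomorphically to $z\in\Vzeroeps$'' by ``$g$ extends over $X\times\Vinfeps$''. You should flag this rather than make the switch tacitly: with the paper's conventions ($G=\Psi_\infty^{-1}\Psi_0$, $\Psi_0$ holomorphic on $\Vzeroeps$, $\Psi_\infty$ on $\Vinfeps$, and $g$ acting on the left in $gGg^*$), absorbing $g$ into $\Psi_\infty\mapsto\Psi_\infty g^{-1}$ and $g^*$ into $\Psi_0\mapsto\Psi_0 g^*$ does require $g$ holomorphic on $\Vinfeps$ (equivalently $g^*$ on $\Vzeroeps$), so your version is the one consistent with the rest of the paper and the statement as printed appears to involve a typo or a convention inherited from \cite{Cr}; your check that $R=g(\cdot,\infty)^\dagger$ depends holomorphically on $(u,v)$ and reproduces exactly the residual freedom~\eqref{psifreedom} is correct. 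Second, there is a small gap at the end of the factorisation argument: to conclude that $\widetilde{\mathbf A}$ is an $\SUtwo$ (not merely $\SLtwoC$) connection you need to choose the Birkhoff factors compatibly with the reality condition, i.e.\ use $\widetilde G=\widetilde G^*$ together with the uniqueness of the splitting up to $k(x)$ to normalise $\widetilde\Psi_\infty=(\widetilde\Psi_0^*)^{-1}$, whence $\widetilde J=\widetilde\psi_\infty^{-1}\widetilde\psi_0$ is Hermitian; the phrase ``(iii) makes it an $\SUtwo$ connection'' should be expanded to this one extra line.
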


\begin{remark}
The above group action on the solution space have been given a cohomological description by Park (see~\cite{Pa} and references therein), which has been further investigated by Popov and Ivanova (see~\cite{Iv2,Iv1,Po} and references therein).
\end{remark}

\begin{remark}
The group action~\eqref{crane} is slightly unusual since, in integrable systems theory, it is usually adjoint or coadjoint orbits of Lie groups that turn out to be relevant. If we consider the case where $G$ and $g$ are constant, and study the action of {\SLtwoC} on itself by $(g, G) \mapsto g \cdot G := g G g^{\dagger}$, then the generic orbits of this action are five-dimensional. As such, the orbits do not carry the invariant symplectic structures that one would associate with, for example, coadjoint orbits. A brief investigation of this orbit structure is given in Appendix~\ref{orbits}.
\end{remark}

The connection between Theorem~\ref{thm:Crane} and the transformation~\eqref{Jsymmetry} is given by the following:
\begin{theorem}
Given $T\colon U \times \Veps \rightarrow \sltwoC$ as in Proposition~\ref{prop:Jsymmetry}, the corresponding flow on the space of patching matrices is given by
\begin{equation}
\dot{G}(x, z) = - T(x, z) G(x, z) - G(x, z) T^*(x, z) + \rho_{\infty}(x, z) G(x, z) + G(x, z) \rho_0(x, z)
\label{Gdot}
\end{equation}
for $(x, z) \in \mathbb{R}^4 \times \Veps$. In this equation, $\rho_0\colon \mathbb{R}^4 \times \Vzeroeps \rightarrow \sltwoC$ and $\rho_{\infty}\colon \mathbb{R}^4 \times \Vinfeps \rightarrow \sltwoC$ are analytic functions of $z$ on the respective regions and satisfy
\begin{align*}
\left( \partial_{\vbar} + z \partial_u \right) \rho_0(x, z) =
\left( \partial_{\ubar} - z \partial_v \right) \rho_0(x, z) = 0,
\\
\left( \partial_{\vbar} + z \partial_u \right) \rho_{\infty}(x, z) =
\left( \partial_{\ubar} - z \partial_v \right) \rho_{\infty}(x, z) = 0.
\end{align*}
\end{theorem}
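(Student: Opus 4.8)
The plan is to realise the flow~\eqref{Jsymmetry} on $J$ as a flow on the wavefunctions $\Psi_0,\Psi_\infty$ of the associated linear problem~\eqref{ALP} and then to differentiate the factorisation $G=\Psi_\infty^{-1}\Psi_0$ from~\eqref{patching}. Differentiating a one-parameter family $J(\cdot,s)$ with its wavefunctions $\Psi_0(\cdot,\cdot,s)$ and $\Psi_\infty(\cdot,\cdot,s)=\Psi_0^*(\cdot,\cdot,s)^{-1}$ gives at once
\[
\dot G = \Psi_\infty^{-1}\bigl(\dot\Psi_0\,\Psi_0^{-1}-\dot\Psi_\infty\,\Psi_\infty^{-1}\bigr)\Psi_0 ,
\]
so the statement reduces to controlling the logarithmic derivatives $\Lambda_0:=\dot\Psi_0\,\Psi_0^{-1}$ and $\Lambda_\infty:=\dot\Psi_\infty\,\Psi_\infty^{-1}$, which, for a family of wavefunctions depending holomorphically on $z$, are themselves holomorphic on $\Vzeroeps$ and $\Vinfeps$ respectively.

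The algebraic crux is that the $\sltwoC$-valued fields $\Psi_\infty T\Psi_\infty^{-1}$ and $\Psi_0 T^*\Psi_0^{-1}$, holomorphic in $z$ on $U\times\Veps$, are \emph{covariantly constant} for the operators $(\partial_{\vbar}+z\partial_u)+\mathrm{ad}_{A_{\vbar}+zA_u}$ and $(\partial_{\ubar}-z\partial_v)+\mathrm{ad}_{A_{\ubar}-zA_v}$; this follows from~\eqref{ALP} for $\Psi_0$ and $\Psi_\infty$ together with the fact that $T$, and hence (by a short calculation) $T^*$, is annihilated by $\partial_{\vbar}+z\partial_u$ and $\partial_{\ubar}-z\partial_v$. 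Write $X:=\Psi_\infty T\Psi_\infty^{-1}+\Psi_0 T^*\Psi_0^{-1}$ and split it on the annulus $\Veps$ as $X=X_{(0)}+X_{(\infty)}$, with $X_{(0)}$ holomorphic on $\Vzeroeps$ and $X_{(\infty)}$ holomorphic on $\Vinfeps$ (the Laurent split into non-negative and negative powers of $z$). A direct computation shows the two operators above send $X_{(0)}$ into functions holomorphic on $\Vzeroeps$ and $X_{(\infty)}$ into functions holomorphic on $\Vinfeps$, each up to one $z$-independent matrix, while killing $X=X_{(0)}+X_{(\infty)}$; hence each of $X_{(0)},X_{(\infty)}$ is covariantly constant modulo a $z$-independent term.

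This is precisely what is needed to produce a compatible lift: one takes
\[
\Lambda_0 = -X_{(0)}+\Psi_0\,\rho_0\,\Psi_0^{-1},\qquad
\Lambda_\infty = X_{(\infty)}-\Psi_\infty\,\rho_\infty\,\Psi_\infty^{-1},
\]
and fixes $\rho_0,\rho_\infty$ — together with a gauge normalisation that sets $\dot A_u=\dot A_v=0$, possible by the linearisation of~\eqref{Fuv} and simple-connectedness of $U$ — so that $\dot\Psi_0,\dot\Psi_\infty$ integrate the prescribed $\dot J$ and match the normalisations $\psi_0=\Psi_0(\cdot,0)$, $\psi_\infty=\Psi_\infty(\cdot,\infty)$. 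Holomorphy of $\Lambda_0,\Lambda_\infty,X_{(0)},X_{(\infty)}$ then forces $\rho_0$ holomorphic on $\Vzeroeps$ and $\rho_\infty$ holomorphic on $\Vinfeps$, while covariant constancy of $\Psi_0\rho_0\Psi_0^{-1}$ and $\Psi_\infty\rho_\infty\Psi_\infty^{-1}$ translates exactly into the first-order equations for $\rho_0,\rho_\infty$ stated in the theorem. Substituting into the formula for $\dot G$ and replacing $\Psi_\infty^{-1}\Psi_0$ by $G$ throughout, $X_{(0)}+X_{(\infty)}=X$ gives $-\Psi_\infty^{-1}X\Psi_0 = -T(x,z)\,G(x,z)-G(x,z)\,T^*(x,z)$, and the remaining terms give $\rho_\infty(x,z)\,G(x,z)+G(x,z)\,\rho_0(x,z)$; this is~\eqref{Gdot}.

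The main obstacle is the construction of this lift with the required holomorphy. Concretely one must keep track of the $z$-independent ``leakage'' produced when $z\partial_u$ and $z\partial_v$ act on the negative Laurent tail $X_{(\infty)}$, justify the gauge choice $\dot A_u=\dot A_v=0$ (a consistency check reducing, after the linearisation of~\eqref{sdym}, to an integrability condition guaranteed by $F_{uv}=0$), and verify that the $\dot\Psi_0,\dot\Psi_\infty$ so constructed actually integrate the prescribed $\dot J$ and not merely the linearisation~\eqref{linearisation}. An alternative route avoids the gauge discussion: take the wavefunction flows directly from the construction in~\cite{CGSW,CGW,CW,Cr} and, using the finite action~\eqref{crane} of Theorem~\ref{thm:Crane}, identify the generator corresponding to the $T$-flow as $-T+\rho_\infty$ (with $\rho_0=\rho_\infty^*$); this again reduces the whole matter to the covariant-constancy property above together with the chain rule and the reality relations $\Psi_\infty=(\Psi_0^*)^{-1}$ and $\chi_\infty(x,z)=\chi_0(x,\sigma(z))^{-\dagger}$.
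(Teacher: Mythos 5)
First, a point of comparison: the paper does not actually prove this theorem. It is stated bare, and the surrounding remarks explicitly defer the ``first direct derivation of the infinitesimal result~\eqref{Gdot} from the generator~\eqref{Jsymmetry}'' to the companion paper~\cite{sdym}, with only an allusion to Proposition~1(b) of~\cite{Cr} for the absorbability of $\rho_0,\rho_\infty$. So there is no in-text argument to compare against; what can be said is that your reconstruction follows the standard dressing-transformation route and its skeleton is sound: differentiating $G=\Psi_\infty^{-1}\Psi_0$ to get $\dot G=\Psi_\infty^{-1}(\dot\Psi_0\Psi_0^{-1}-\dot\Psi_\infty\Psi_\infty^{-1})\Psi_0$; the covariant constancy of $X=\Psi_\infty T\Psi_\infty^{-1}+\Psi_0T^*\Psi_0^{-1}$ for the operators of~\eqref{ALP} (including the check that $T^*$ is annihilated by $\partial_{\vbar}+z\partial_u$ and $\partial_{\ubar}-z\partial_v$, which is correct); and the observation that the Laurent split $X=X_{(0)}+X_{(\infty)}$ is covariantly constant up to a single $z$-independent matrix, because the operators send the two halves into series with only non-negative, respectively non-positive, powers of $z$. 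One quibble: the ``gauge normalisation $\dot A_u=\dot A_v=0$'' is not a choice you get to make — once $\rho_0,\rho_\infty$ are covariantly constant, the leakage computation \emph{forces} $\dot A_u=\dot A_v=0$ and puts the whole variation into $\dot A_{\ubar},\dot A_{\vbar}$, so the flow leaves the unitary gauge (in general $\dot A_{\ubar}\neq-\dot A_u^{\dagger}$) and must be followed by a compensating complex gauge transformation restoring $\psi_\infty=(\psi_0^{-1})^{\dagger}$ before the $\sutwo$ connection is read off.

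The one genuine gap — which you flag but do not close — is the identification of the flow you construct with the specific generator~\eqref{Jsymmetry}. Your choice of $\Lambda_0,\Lambda_\infty$ yields $\dot J=\psi_\infty^{-1}\bigl(-X_{(0)}(x,0)+\cdots\bigr)\psi_0$, i.e.\ $\dot J$ is controlled by the \emph{zeroth Laurent mode} of $X$, whereas~\eqref{Jsymmetry} is the \emph{pointwise} value of $(\Psi_\infty T\Psi_\infty^{-1})(x,\lambda)+(\Psi_0T^*\Psi_0^{-1})(x,\sigma(\lambda))$ at a fixed $\lambda\in\Veps$ (note also the sign discrepancy between the $+T$ in~\eqref{Jsymmetry} and the $-T$ in~\eqref{Gdot}). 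Reconciling these — mode by mode in a Laurent expansion of $T$, or via a contour-integral identity exploiting $\chi_0(x,0)=\Id$ and $\chi_\infty(x,\infty)=\Id$ — is exactly the content the paper outsources to~\cite{sdym}, and it is the part of the argument that cannot be waved through: without it you have shown that \emph{some} natural flow on patching matrices has the form~\eqref{Gdot}, not that it is \emph{the} flow corresponding to~\eqref{Jsymmetry}. Your alternative route via Theorem~\ref{thm:Crane} has the same issue in disguise, since identifying the generator of~\eqref{crane} as $-T+\rho_\infty$ presupposes precisely this correspondence.
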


\begin{remark}
It follows from a similar argument to that given in the proof of Proposition~1 (b) of~\cite{Cr} that the terms $\rho_0$ and $h_{\infty}$ in the above formula may be absorbed into a change of holomorphic frame on the sets $z \in \Vzeroeps$ and $\Vinfeps$, respectively.
\end{remark}

\begin{remark}
The group action~\eqref{crane} was derived in~\cite{Cr}, arguing by analogy with the action of dressing transformations in harmonic map theory. It has been investigated further in, for example, \cite{Iv2,Iv1,Po}. The first direct derivation of the infinitesimal result~\eqref{Gdot} from the generator~\eqref{Jsymmetry}, to my knowledge, appears in~\cite{sdym}.
\end{remark}

\section{Reducible connections}
\label{sec:reducible}

Recall~\cite[Chapter~3]{FU} that a connection, $D$, on an {\SUtwo} bundle $\pi\colon E \rightarrow U$ is reducible if the group of gauge transformations $\mathcal{G} := C^{\infty}(U, \SUtwo)$ modulo its centre does not act freely on the connection $D$. We now proceed to derive the most general form of a reducible connection on a simply-connected, open subset of $\mathbb{R}^4$. In doing so, we make extensive use of the classical Pauli matrices, which we define as follows:
\[
\boldsymbol{\tau} \equiv \left( \tau_1, \tau_2, \tau_3 \right) =
\left(
\begin{pmatrix} 0 &1 \\ 1 &0 \end{pmatrix},
\begin{pmatrix} 0 &-i \\ i &0 \end{pmatrix},
\begin{pmatrix} 1 &0 \\ 0 &-1 \end{pmatrix}
\right).
\]

\begin{proposition}
Let $U$ be a simply-connected, open subset of $\mathbb{R}^4$, $a \in C^{\infty}(U, \mathbb{R})$ a harmonic function. We define the connection
\begin{equation}
\mathbf{A} = \frac{1}{2} \left( \partial a - \overline{\partial} a \right) \tau_3 \in \Omega^1(U, \sutwo).
\label{reducibleconnection}
\end{equation}
Then the connection, $\mathbf{A}$, is reducible and satisfies the {\sdyme} on $U$. Conversely, up to gauge transformation, all reducible self-dual connections on $U$ arise in this way.
\label{prop:redconn}
\end{proposition}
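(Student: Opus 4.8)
The plan is to treat the two implications separately: the forward direction as a short curvature computation, the converse by combining the standard description of reducible {\SUtwo}-connections with the integration of the ``flatness'' equations \eqref{Fuv}, \eqref{Fubarvbar} already carried out above to produce $\psi_0$, $\psi_{\infty}$.

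For the forward direction, note that since $a$ is real the $1$-form $\tfrac12(\partial a - \overline{\partial} a)$ is purely imaginary, so $\mathbf{A}$ is valued in the abelian subalgebra $\mathbb{R}(i\tau_3) \cong \mathfrak{u}_1 \subset \sutwo$; its curvature is therefore $\mathbf{F} = d\mathbf{A} = -(\partial\overline{\partial} a)\,\tau_3$. In components this gives $F_{uv} = \tfrac12(a_{uv} - a_{vu})\tau_3 = 0$, likewise $F_{\ubar\vbar} = 0$, and $F_{u\ubar} + F_{v\vbar} = -(a_{u\ubar} + a_{v\vbar})\tau_3$, which is a constant multiple of $(\Delta a)\tau_3$; hence \eqref{sdym} holds precisely because $a$ is harmonic. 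Reducibility is immediate: every constant gauge transformation $\exp(i\theta\tau_3)$, $\theta\in\mathbb{R}$, commutes with $\tau_3$ and so fixes $\mathbf{A}$, descending to a circle in $\mathcal{G}/Z(\mathcal{G})$ that stabilises $\mathbf{A}$; in particular $\mathcal{G}/Z(\mathcal{G})$ does not act freely (cf.\ \cite[Chapter~3]{FU}).

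For the converse, I would start from the classification of reducible connections on an {\SUtwo}-bundle \cite[Chapter~3]{FU}: reducibility forces the structure group to reduce to a maximal torus, so after a gauge transformation $\mathbf{A} = \alpha\,\tau_3$ with $\alpha \in \Omega^1(U, i\mathbb{R})$. The curvature is then the abelian expression $(d\alpha)\,\tau_3$, and \eqref{sdym} splits into $\partial_u\alpha_v = \partial_v\alpha_u$ (and its conjugate) together with $(\partial_u\alpha_{\ubar} - \partial_{\ubar}\alpha_u) + (\partial_v\alpha_{\vbar} - \partial_{\vbar}\alpha_v) = 0$. Just as \eqref{Fuv} was integrated above, the first of these makes the $(1,0)$-part $\alpha^{1,0} = \alpha_u\,du + \alpha_v\,dv$ ``$\partial$-closed'', hence equal to $\partial c$ for some $c\colon U \to \mathbb{C}$; concretely, when $\mathbf{A}$ is Cartan-valued one may take $\psi_{\infty} = \exp(c\tau_3)$ in \eqref{Acomponents}, whence $\alpha_u = -\partial_u c$, $\alpha_v = -\partial_v c$. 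Reality of $\alpha$ then forces $\alpha^{0,1} = -\overline{\partial}\,\bar c$, so $\alpha = \partial c - \overline{\partial}\,\bar c$; writing $c = \xi + i\eta$ with $\xi, \eta$ real, $\alpha = (\partial\xi - \overline{\partial}\xi) + i\,d\eta$, and the residual abelian gauge freedom $\alpha \mapsto \alpha - i\,d\theta$ removes the exact term $i\,d\eta$. Thus, up to gauge, $\mathbf{A} = (\partial\xi - \overline{\partial}\xi)\,\tau_3 = \tfrac12(\partial a - \overline{\partial} a)\,\tau_3$ with $a := 2\xi$ real. It remains to feed this into the one equation of \eqref{sdym} not yet used, \eqref{Fuubar}, which reads, by the forward-direction computation, $(a_{u\ubar} + a_{v\vbar})\tau_3 = 0$, i.e.\ $\Delta a = 0$; so $a$ is harmonic and $\mathbf{A}$ has the asserted form.

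I expect the only genuinely delicate step to be the integration producing $c$ from $\partial_u\alpha_v = \partial_v\alpha_u$ on a general simply-connected $U$; but this is just the abelian shadow of the passage \eqref{Fuv} $\Rightarrow$ \eqref{Acomponents} already taken for granted in the text, and may be imported on the same footing. Everything else is bookkeeping.
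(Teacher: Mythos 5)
Your argument is correct, and the forward direction (the abelian curvature computation $\mathbf{F}=d\mathbf{A}=-(\partial\overline{\partial}a)\,\tau_3$ plus the observation that the constant circle $\exp(i\theta\tau_3)$ stabilises $\mathbf{A}$) matches what the paper leaves implicit. For the converse, however, you take a genuinely different route. You import from \cite{FU} the full strength of the splitting $E=L\oplus L^{-1}$, reduce at once to a $\Uone$-connection $\mathbf{A}=\alpha\,\tau_3$, and then solve the abelianised equations \eqref{sdym} directly, integrating $F_{uv}=0$ to get a potential $c$ and gauging away its imaginary part. The paper instead never reduces the structure group at the outset: it starts only from the parallel section $\eta$ of the adjoint bundle, conjugates it by $\psi_0$ to a \emph{holomorphic} matrix $A(u,v)$, derives the algebraic constraint $JA+A^{\dagger}J=0$ of \eqref{JA}, and uses positivity of the Hermitian $J$ to show $\det A$ is a positive constant, so that $A$ can be rotated by a holomorphic frame change to $i\tau_3$; the diagonal form $J=\exp(a\tau_3)$ and harmonicity of $a$ then drop out of the Yang--Pohlmeyer equation. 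Your version is shorter and more elementary, but it treats the reduction to the maximal torus as a black box, whereas the paper proves it within its own formalism (the positivity argument for $\det A$ is precisely where self-duality and the Hermitian structure of $J$ enter); more importantly, the paper's route delivers exactly the data $\psi_0=\exp(\tfrac{a}{2}\tau_3)$, $J=\exp(a\tau_3)$ that are consumed by Proposition~\ref{Gred} and Section~\ref{flatorbit}, which your abelian computation does not produce. Two minor points: the ``delicate step'' you flag (integrating $\partial_u\alpha_v=\partial_v\alpha_u$ to $\alpha^{1,0}=\partial c$ on simply-connected $U$) is indeed on the same footing as the paper's passage from \eqref{Fuv} to \eqref{Acomponents}, so no objection there; and your global gauge $\mathbf{A}=\alpha\,\tau_3$ tacitly assumes the eigen-line-bundle $L$ is trivial over $U$, but the paper's own local-to-global step (rotating $A$ to $i\tau_3$ near each point) glosses over the same issue, so you are not worse off.
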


\begin{proof}
A connection is reducible if and only if there exists a parallel section, $\eta$, of the adjoint bundle $\mathrm{ad}\,\sutwo$~\cite[Theorem~3.1]{FU}. In terms of local coordinates $(u, v) \in \mathbb{C}^2 \cong \mathbb{R}^4$, and relative to a local trivialisation of the bundle $E$, this condition implies that
\[
\frac{\partial}{\partial x^a} \eta + \left[ A_a, \eta \right] = 0.
\]
It follows from equations~\eqref{Acomponents} that, for a parallel section $\eta$, the map $A\colon \mathbb{C}^2 \rightarrow \sltwoC$ defined by the equation $\eta = \psi_0(x) A(u, v) \psi_0(x)^{-1}$ is holomorphic. In a similar fashion, using ~\eqref{Acomponents} and the relationship between $\psi_0$ and $\psi_{\infty}$, one may verify that $\eta = - \psi_{\infty}(x) A(u, v)^{\dagger} \psi_{\infty}(x)^{-1}$.
These relations imply that
\begin{equation}
J A(u, v) + A(u, v)^{\dagger} J = 0.
\label{JA}
\end{equation}
Note that this equation and the fact that $\det J \neq 0$ implies that $\det A(u, v) = \det A(u, v)^*$. Therefore $\det A(u, v)$ is \emph{real}. Since $A$ is holomorphic in $u$ and $v$, it follows that $\det A$ is a real constant.

Let $A = \left( \mathbf{R} + i \mathbf{I} \right) \cdot \boldsymbol{\tau}$, with $\mathbf{R}, \mathbf{I}\colon U \rightarrow \mathbb{R}^3$. The fact that $A$ depends holomorphically on $(u, v)$ implies that
\begin{equation}
\partial_t \mathbf{R} = \partial_x \mathbf{I}, \qquad
\partial_x \mathbf{R} = - \partial_t \mathbf{I}, \qquad
\partial_y \mathbf{R} = - \partial_z \mathbf{I}, \qquad
\partial_z \mathbf{R} = \partial_y \mathbf{I}.
\label{CR}
\end{equation}
Moreover, we find that
\[
\det A = |\mathbf{I}|^2 - |\mathbf{R}|^2 - 2 i \langle \mathbf{R}, \mathbf{I} \rangle.
\]
Since $\det A$ is real, we deduce that $\langle \mathbf{R}, \mathbf{I} \rangle = 0$. We now let $J = \Lambda + \boldsymbol{\lambda} \cdot \boldsymbol{\tau}$, with $\boldsymbol{\lambda}\colon U \rightarrow \mathbb{R}^3$ and the condition that $\det J = 1$ implies that we require $\Lambda^2 = 1 + |\boldsymbol{\lambda}|^2$. Imposing~\eqref{JA}, we find that we require
\begin{equation}
\Lambda \mathbf{R} = \boldsymbol{\lambda} \times \mathbf{I}.
\label{RI}
\end{equation}
This equation implies that
\[
\Lambda^2 |\mathbf{R}|^2 = |\boldsymbol{\lambda} \times \mathbf{I}|^2 = |\boldsymbol{\lambda}|^2 |\mathbf{I}|^2 - \langle \mathbf{I}, \boldsymbol{\lambda} \rangle^2.
\]
Therefore
\begin{align*}
\det A = |\mathbf{I}|^2 - |\mathbf{R}|^2 &=
\frac{1}{|\boldsymbol{\lambda}|^2} \left( \Lambda^2 |\mathbf{R}|^2 + \langle \mathbf{I}, \boldsymbol{\lambda} \rangle^2 \right) - |\mathbf{R}|^2
\\
&= \frac{1}{|\boldsymbol{\lambda}|^2} \left( |\mathbf{R}|^2 + \langle \mathbf{I}, \boldsymbol{\lambda} \rangle^2 \right)
\\
&\ge 0
\end{align*}
For $\boldsymbol{\lambda} \neq 0$, equality occurs in this inequality if and only if $\mathbf{R} = 0$ and $\langle \mathbf{I}, \boldsymbol{\lambda} \rangle = 0$. From~\eqref{RI}, it follows that, in this case, $\mathbf{R} = \mathbf{I} = 0$, so $A = 0$. Moreover, if $\boldsymbol{\lambda} = 0$, then~\eqref{JA} implies that $\mathbf{R} = 0$, so $\det A = |\mathbf{I}|^2$, which is, again, strictly positive unless $\mathbf{I} = 0$ and, hence, $A = 0$.

\vs
To summarise, the fact that $\det A$ is real, along with~\eqref{JA}, implies that $\det A$ is a non-negative constant. Moreover, $\det A = 0$ if and only if $A = 0$. Since we are assuming $A \neq 0$ and since any constant multiple of a parallel section is also parallel we may, without loss of generality, assume that $\det A = 1$. In this case, it follows that the eigenvalues of $A$ are $\pm i$.

\vs
Note that we still have the freedom to rotate the $\psi$'s, as given in~\eqref{psifreedom}. It follows that $A$ transforms under the adjoint action of $R^{-1}$:
\begin{equation}
A(u, v) \mapsto {\widetilde A}(u, v) := R(u, v)^{-1} A(u, v) R(u, v) = \mathrm{Ad}_{R^{-1}} A.
\label{Atransformation}
\end{equation}
We now write $A$ in the form
\[
A(u, v) = \begin{pmatrix} a(u, v) &b(u, v) \\ c(u, v) &-a(u, v) \end{pmatrix}
\]
where $a, b, c$ are holomorphic functions of $(u, v)$. On a neighbourhood of any point $(u, v)$ at which $a(u, v) \neq -i$, the holomorphic change of frame
\[
R_+(u, v) = \begin{pmatrix} a+i &b \\ c &a+i \end{pmatrix}
\]
has the property that
\[
R_+(u, v)^{-1} A(u, v) R_+(u, v) = i \tau_3.
\]
Similarly, for $a(u, v) \neq +i$,
\[
R_-(u, v) = \begin{pmatrix} -b &a-i \\ a-i &c \end{pmatrix}
\]
gives a holomorphic change of frame with the property that
\[
R_-(u, v)^{-1} A(u, v) R_-(u, v) = i \tau_3.
\]
As such, given any point $p \in X$, there exists a neighbourhood of $p$ and a holomorphic frame such that $A = i \tau_3$ in that frame.

From~\eqref{JA}, it follows that there exist real functions $\alpha, \beta$ such that $J = \alpha \, \Id + \beta \tau_3$. Since $\det J = 1$, we have $\alpha^2 = 1 + \beta^2$. Since $J$ is continuous, $\alpha$ will have constant sign, so we assume that $\alpha > 0$. Therefore, since $U$ is assumed simply-connection, we may consistently define a real-valued function $a$ with the property that $\alpha = \cosh a, \beta = \sinh a$. It then follows that
\begin{equation}
J = \exp \left( a \tau_3 \right).
\label{Jexp}
\end{equation}
We therefore have
\[
J^{-1} J_u = a_u \, \tau_3, \qquad J^{-1} J_v = a_v \, \tau_3.
\]
Imposing the Yang--Pohlmeyer equation implies that $a$ is harmonic:
\[
\left( \partial_u \partial_{\ubar} + \partial_v \partial_{\vbar} \right) a = 0.
\]

From~\eqref{Jexp}, we see that, up to a gauge transformation, we may take
\[
\psi_0 = \exp \left( \frac{a}{2} \tau_3 \right), \qquad \psi_{\infty} = \exp \left( - \frac{a}{2} \tau_3 \right).
\]
The form of the connection given in equation~\eqref{reducibleconnection} then follows from equation~\eqref{Acomponents}. The parallel section, $\eta$, is equal to $i \tau_3$.
\end{proof}

\begin{example}
The case
\[
J(x) = \exp \left\{\left( |u|^2 - |v|^2 \right) \tau_3 \right\}
\]
corresponds to $a = |u|^2 - |v|^2$ and, therefore, defines a reducible connection. In this case, the connection is non-singular on $\mathbb{R}^4$. However, the curvature is not $L^2$, and therefore the connection cannot be extended to $S^4$~\cite{U}.

In this particular case the connection is algebraically special, in the sense that, in addition to satisfying the {\sdyme}~\eqref{sdym}, the curvature satisfies
\[
F_{u{\vbar}} = 0, \qquad F_{v{\ubar}} = 0.
\]
It can be shown that all algebraically special connections arise in this way, and are thus reducible.
\end{example}

Recall~\cite{HitchinLinear} that there is a $1-1$ correspondence between harmonic functions on $U \subseteq \mathbb{R}^4$ and sheaf cohomology classes in $H^1(\widehat{U}, \mathcal{O}(-2))$, where $\widehat{U} := \pi^{-1}(U) \subseteq \cpthree$. In the present case, such a cohomology class may be represented by a holomorphic function\footnote{By holomorphic, we mean with respect to the complex structure induced on $U \times \mathbb{C}^*$ as a subset of {\cpthree}.} $f\colon U \times \mathbb{C}^* \rightarrow \mathbb{C}$. In terms of homogeneous coordinates on {\cpthree}, we have $f(\lambda \mathbf{z}) = \lambda^{-2} f(\mathbf{z})$. The corresponding harmonic function on $U \subseteq \mathbb{R}^4$ is then given by the contour integral
\begin{equation}
a(x) = \frac{1}{2 \pi i} \oint_{\gamma} f(u - w \vbar, v + w \ubar, w) dw,
\label{aeqn}
\end{equation}
where $\gamma := \{ w \in \mathbb{C} \subset \mathbb{C}P^1 : |w| = 1 \}$.

\begin{proposition}
Given the connection as in Proposition~\ref{prop:redconn}, the patching matrix for the holomorphic bundle on $\widehat{U}$ may be taken as
\[
G(x, z) = \exp \left[ \frac{1}{2} \left( F(x, z) + F^*(x, z) \right) \tau_3 \right]
\]
where
\begin{equation}
F(x, z) := \frac{1}{2 \pi i} \oint_{\gamma} \frac{w+z}{w-z} f(u - w \vbar, v + w \ubar, w) dw,
\label{F}
\end{equation}
and the holomorphic function $f$ is a representative of the cohomology class in $H^1(\widehat{U}, \mathcal{O}(-2))$ corresponding to the harmonic function $a$.
\label{Gred}
\end{proposition}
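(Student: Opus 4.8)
The plan is to produce an explicit solution $\Psi_0$ of the associated linear problem~\eqref{ALP} for the connection~\eqref{reducibleconnection} and then to read off the patching matrix from its definition~\eqref{patching}. Since the connection is abelian---every component is a multiple of $\tau_3$---I look for $\Psi_0$ of the form $\Psi_0(x,z)=\exp\!\big(\tfrac12 F(x,z)\,\tau_3\big)$ with $F$ scalar-valued, and I recall from the proof of Proposition~\ref{prop:redconn} that one may take $\psi_0=\exp\!\big(\tfrac a2\tau_3\big)$, so that $A_u=\tfrac12 a_u\tau_3$, $A_v=\tfrac12 a_v\tau_3$, $A_{\ubar}=-\tfrac12 a_{\ubar}\tau_3$, $A_{\vbar}=-\tfrac12 a_{\vbar}\tau_3$. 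With this ansatz the matrix system~\eqref{ALP} reduces to the scalar conditions
\[
\left(\partial_{\vbar}+z\partial_u\right)F = a_{\vbar}-z\,a_u,\qquad
\left(\partial_{\ubar}-z\partial_v\right)F = a_{\ubar}+z\,a_v,
\]
together with holomorphy in $z$ and the normalisation $F(x,0)=a(x)$, which is exactly what forces $\Psi_0(x,0)=\psi_0(x)$.

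The second step is to verify that $F$ as given by~\eqref{F} satisfies these conditions. Writing $\xi:=u-w\vbar$ and $\eta:=v+w\ubar$ for the arguments of $f$, the chain rule gives $\partial_u f=\partial_\xi f$, $\partial_{\vbar}f=-w\,\partial_\xi f$, $\partial_v f=\partial_\eta f$, $\partial_{\ubar}f=w\,\partial_\eta f$, and hence $\left(\partial_{\vbar}+z\partial_u\right)f=(z-w)\,\partial_\xi f$ and $\left(\partial_{\ubar}-z\partial_v\right)f=(w-z)\,\partial_\eta f$. The key observation is that these factors cancel the pole of the Cauchy-type kernel in~\eqref{F}, so that differentiating under the integral sign yields $\left(\partial_{\vbar}+z\partial_u\right)F=-\tfrac{1}{2\pi i}\oint_\gamma (w+z)\,\partial_\xi f\,dw$; using $w\,\partial_\xi f=-\partial_{\vbar}f$ and the Penrose transform~\eqref{aeqn} one gets $\tfrac{1}{2\pi i}\oint_\gamma w\,\partial_\xi f\,dw=-a_{\vbar}$ and $\tfrac{1}{2\pi i}\oint_\gamma \partial_\xi f\,dw=a_u$, which gives the first equation; the second follows identically with $\eta$ in place of $\xi$. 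Holomorphy in $z$ and the normalisation $F(x,0)=a(x)$ (the kernel equals $1$ at $z=0$) are immediate from~\eqref{F}.

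The final step is assembly and bookkeeping. Because $\tau_3$ is Hermitian, $\Psi_\infty=\Psi_0^{*-1}=\exp\!\big(-\tfrac12 F^*\,\tau_3\big)$ with $F^*(x,z)=\overline{F(x,\sigma(z))}$, and since $F$ and $F^*$ are scalar the exponentials commute, so~\eqref{patching} gives $G(x,z)=\exp\!\big[\tfrac12\big(F(x,z)+F^*(x,z)\big)\tau_3\big]$, as claimed. One then checks that $\det G=1$ (because $\operatorname{tr}\tau_3=0$), that the reality condition $G^*=G$ holds (because $\sigma^2=\mathrm{Id}$), and that the holomorphy domains are right: expanding $\tfrac{w+z}{w-z}$ as a geometric series in $z/w$ on $|w|=1$ shows $F$ extends holomorphically to a disc $\Vzeroeps$ whose radius is governed by the annulus of holomorphy of $f(\,\cdot\,,w)$, whence $F^*$ is holomorphic on $\Vinfeps$ and $G$ on $\Veps$. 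I expect the one genuinely delicate point to be exactly this last item: the integral in~\eqref{F} has to be read as the analytic continuation of its value for $|z|<1$ rather than as the literal contour integral, since the latter jumps across $|z|=1$ and only the continuation makes $\tfrac12(F+F^*)$ holomorphic on an annulus straddling the unit circle. Finally, since any two admissible choices of $\Psi_0$ differ by right multiplication by a map holomorphic on $\Vzeroeps$, the corresponding patching matrices differ by the equivalence $G\mapsto h^*Gh$ of Theorem~\ref{thm:Crane}, which is why $G$ is only asserted to be one admissible representative ("may be taken as").
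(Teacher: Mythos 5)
Your proposal is correct and follows essentially the same route as the paper: the ansatz $\Psi_0=\exp\bigl(\tfrac12 F\tau_3\bigr)$ reduces~\eqref{ALP} to the scalar transport equations for $F$, and the chain-rule identity $\left(\partial_{\ubar}-z\partial_v\right)f=(w-z)\partial_2 f$ (and its companion) shows the Cauchy-kernel integral~\eqref{F} solves them with $F(x,0)=a(x)$; you merely run the verification from the candidate to the equations rather than deriving the candidate from $\left(\partial_{\ubar}+z\partial_v\right)a$ as the paper does. Your closing remarks --- the explicit assembly $G=\exp\bigl[\tfrac12(F+F^*)\tau_3\bigr]$ and the observation that the contour integral must be analytically continued past $|z|=1$ to live on $\Veps$ --- are correct refinements of points the paper leaves implicit.
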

\begin{proof}
We assume that there exists $\Psi_0(x, z)$ of the form $\exp \left( \frac{1}{2} F(x, z) \tau_3 \right)$, with $F(x, 0) = a(x)$. From~\eqref{ALP} for $\Psi$ and the explicit form of the connection, we deduce that $F$ must be analytic in $z$ and satisfy the relations
\[
\left( \partial_{\ubar} - z \partial_v \right) F(x, z) = \left( \partial_{\ubar} + z \partial_v \right) a(x), \qquad
\left( \partial_{\vbar} + z \partial_u \right) F(x, z) = \left( \partial_{\vbar} - z \partial_u \right) a(x).
\]
From~\eqref{aeqn} we then calculate
\begin{align*}
\left( \partial_{\ubar} + z \partial_v \right) a(x) &=
\left( \partial_{\ubar} + z \partial_v \right) \frac{1}{2 \pi i} \oint_{\gamma} f(u - w \vbar, v + w \ubar, w) dw
\\
&= \frac{1}{2 \pi i} \oint_{\gamma} \left( w + z \right) \left( \partial_2 f \right) (u - w \vbar, v + w \ubar, w) dw
\\
&= \frac{1}{2 \pi i} \oint_{\gamma} \frac{w + z}{w - z} \left( w - z \right) \left( \partial_2 f \right) (u - w \vbar, v + w \ubar, w) dw
\\
&= \frac{1}{2 \pi i} \oint_{\gamma} \frac{w + z}{w - z} \left( \partial_{\ubar} - z \partial_v \right) f(u - w \vbar, v + w \ubar, w) dw
\\
&= \left( \partial_{\ubar} - z \partial_v \right) \frac{1}{2 \pi i} \oint_{\gamma} \frac{w + z}{w - z} f(u - w \vbar, v + w \ubar, w) dw.
\end{align*}
Along with a similar calculation for $\left( \partial_{\vbar} - z \partial_u \right) a(x)$, we have the above candidate for $F$ given in~\eqref{F}. By its definition as a contour integral, it follows that $F$ is analytic in $z$, for $z$ in the interior of $\gamma$. Since the function $\frac{w + z}{w - z} f(u - w \vbar, v + w \ubar, w)$ is continuous for $w \in \gamma$, and $\gamma$ is compact, the Dominated Convergence Theorem implies that $\lim_{z \rightarrow 0} F(x, z) = a(x)$. As such, $F$ has all of the required properties.
\end{proof}

\begin{remark}
\label{rem:Gred}
Note that the patching matrix in Proposition~\ref{Gred} takes the form $G(x, z) = \exp \left( \varphi(x, z) \tau_3 \right)$, where $\varphi$ is a holomorphic function of $(u-z\vbar, v+z\ubar, z)$ that satisfies $\varphi^*(x, z) = \varphi(x, z)$.
\end{remark}

\begin{example}
In the case of our algebraically special connection, where $a = |u|^2 - |v|^2$, we may take $f(x, z) = \frac{1}{z^2} (u-z\vbar)(v+z\ubar)$. We then find that $F(x, z) = |u|^2 - |v|^2 - 2 \ubar \vbar$ and $G(x, z) = \exp \left[ \frac{1}{z} \left( u - z \vbar \right) \left( v + z \ubar \right) \tau_3 \right]$.
\end{example}

\section{Orbit of the flat connection}
\label{flatorbit}

Let $T\colon U \times S^1$ be analytic on a neighbourhood of $U \times S^1$ in $U \times \mathbf{C}^* \subset \cpthree$, with the property that
\begin{equation}
\left[ T, T^* \right] = 0.
\label{TTstar}
\end{equation}
Given a solution of the {\sdyme}, described by patching matrix $G(x, z)$, we may consider the one-parameter family of connections generated by $T$ via the flow~\eqref{Gdot}. Since the functions $\rho_0$, $\rho_{\infty}$ can be removed by a holomorphic change of basis on the regions $U \times \Vzeroeps$ and $U \times \Vinfeps$ we may, without loss of generality, fix the holomorphic bases by setting $\rho_0 = \rho_{\infty} = 0$. The unique solution of~\eqref{Gdot} with initial conditions $G(x, z)$ is then
\[
G_t(x, z) = \exp \left( - t T(x, z) \right) \, G(x, z) \, \exp \left( - t T(x, z)^* \right).
\]
In general, one would perform a Birkhoff splitting of $G_t$, which would yield connections $\mathbf{A}_t$ generated from the connection, $\mathbf{A}$, corresponding to the patching matrix $G$. (Generally, there will be jumping points at which $G$ does not admit a splitting of the form~\eqref{patching}, so we will need to shrink the set $U$ accordingly. The set of such points will, generically, be of strictly positive codimension in $U$.)

A case of particular interest to us is when the initial connection is flat, in which case we may take $G(x, z) = 1$. We then have
\[
G_t(x, z) = \exp \left( - t \left( T(x, z) + T(x, z)^* \right) \right)
\]
as the patching matrix of connections that lie in this orbit of the flat connection. As a special case of this construction, letting $T = - \frac{1}{2} \varphi(x, z) \tau_3$ where $\varphi\colon U \times \Veps \rightarrow \SLtwoC$ satisfies
\[
\left( \partial_{\ubar} - z \partial_v \right) \varphi =
\left( \partial_{\vbar} + z \partial_u \right) \varphi = 0
\]
and is analytic in $z \in \mathcal{V}_{\epsilon}$ for some $\epsilon > 0$. Then, assuming that $\epsilon$ is chosen sufficiently small that $G(x, z)$ is analytic on $U \times \Veps$, we deduce that
\[
G_t(x, z) = \exp \left( t \varphi(x, z) \tau_3 \right), \qquad (x, z) \in U \times \Veps.
\]

\vs
In light of this construction, and the classification of patching matrices arising from reducible connections in the previous section, we deduce:
\begin{theorem}
Let $\mathbf{A}$ be a reducible connection on an open subset $U \subset \mathbb{R}^4$. Then $\mathbf{A}$ lies on the orbit of the flat connection on $U$ under the action of the non-local symmetry group of the {\sdyme}.
\end{theorem}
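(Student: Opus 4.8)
The plan is to read the statement off three facts already established: the normal form of reducible self-dual connections (Proposition~\ref{prop:redconn}), the explicit patching matrix attached to it (Proposition~\ref{Gred} and Remark~\ref{rem:Gred}), and the special case of the orbit construction carried out above in the present section.

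First I would reduce to the normal form. By Proposition~\ref{prop:redconn}, after a gauge transformation $\mathbf{A}$ has the form~\eqref{reducibleconnection} for some harmonic $a \in C^{\infty}(U,\mathbb{R})$; and since gauge-equivalent connections induce the same holomorphic bundle on $\widehat{U}$, hence the same patching matrix up to holomorphic changes of frame on $\Vzeroeps$ and $\Vinfeps$, it suffices to place this normal-form connection on the orbit of the flat connection. By Proposition~\ref{Gred} and Remark~\ref{rem:Gred}, its patching matrix may be taken to be $G(x,z) = \exp\bigl(\varphi(x,z)\tau_3\bigr)$, where $\varphi := \tfrac12\bigl(F+F^{*}\bigr)$ with $F$ as in~\eqref{F}.

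The point is that $\varphi$ is exactly a symmetry parameter of the type feeding the orbit construction above. By Remark~\ref{rem:Gred}, $\varphi$ is holomorphic in the twistor variables $(u-z\vbar,\,v+z\ubar,\,z)$ on a neighbourhood {\Veps} of $S^{1}$, so that
\[
\left( \partial_{\ubar} - z\partial_{v} \right)\varphi = \left( \partial_{\vbar} + z\partial_{u} \right)\varphi = 0 ,
\]
and it obeys the reality condition $\varphi^{*}=\varphi$. Setting $T := -\tfrac12\,\varphi\,\tau_{3}\in\sltwoC$, we get $T^{*} = -\tfrac12\varphi^{*}\tau_{3} = T$, so $[T,T^{*}]=0$ and~\eqref{TTstar} hold automatically, and $T$ satisfies the hypotheses of Proposition~\ref{prop:Jsymmetry}. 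This is precisely the special case of the construction spelled out above, taken with initial datum $G_{0}(x,z)=1$.

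Finally I would run the flow~\eqref{Gdot} generated by $T$ from $G_{0}=1$, with the holomorphic frames fixed by $\rho_{0}=\rho_{\infty}=0$, so that its unique solution is
\[
G_{t}(x,z) = \exp\bigl(-t\,T(x,z)\bigr)\,G_{0}(x,z)\,\exp\bigl(-t\,T^{*}(x,z)\bigr) = \exp\bigl(t\,\varphi(x,z)\,\tau_{3}\bigr) .
\]
Then $G_{1}$ is exactly the patching matrix of $\mathbf{A}$ produced by Proposition~\ref{Gred}. Since the patching matrix determines the self-dual connection up to a holomorphic change of frame --- which by Theorem~\ref{thm:Crane} leaves the connection unchanged --- the normal-form connection, and hence $\mathbf{A}$ itself, lies on the orbit of the flat connection under the non-local symmetry group.

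The hard part will be the bookkeeping concealed in the third and fourth steps. A priori $F$ is manufactured as a contour integral analytic only inside $\gamma$ and $F^{*}$ only outside, so one must confirm that $F$ in fact extends analytically a little past $S^{1}$ --- hence onto $\Vzeroeps$, and $F^{*}$ onto $\Vinfeps$, with $\varphi$ analytic on the overlap {\Veps} --- and verify, by a short computation using $\sigma(z)=-1/\overline{z}$ and the inhomogeneous relations for $F$ recorded in the proof of Proposition~\ref{Gred}, that the \emph{sum} $F+F^{*}$ (unlike $F$ alone) is annihilated by $\partial_{\ubar}-z\partial_{v}$ and $\partial_{\vbar}+z\partial_{u}$; one should also take $\epsilon$ small enough that the exponential flow is well defined on $U\times\Veps$. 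By contrast, the Birkhoff-splitting and jumping-point difficulties flagged for a generic orbit do not intervene here: Proposition~\ref{Gred} already guarantees that $G_{1}=\exp(\varphi\tau_{3})$ admits a splitting of the form~\eqref{patching}, so no points need be removed from $U$. With that verification in hand, the theorem follows at once.
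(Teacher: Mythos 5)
Your proposal is correct and follows essentially the same route as the paper: Section~\ref{flatorbit} takes $T = -\tfrac12\varphi(x,z)\tau_3$ with $\varphi^* = \varphi$ holomorphic in the twistor variables, flows the trivial patching matrix to $G_t = \exp(t\varphi\tau_3)$, and identifies the $t=1$ matrix with the reducible patching matrix of Proposition~\ref{Gred} via Remark~\ref{rem:Gred}. The analyticity and reality bookkeeping you flag at the end is precisely what Proposition~\ref{Gred} and Remark~\ref{rem:Gred} are set up to supply, so nothing further is needed.
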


\begin{remark}
As mentioned earlier, the set $U$ on which the self-dual connection is defined will generally shrink under the action of the symmetry group. In the case of reducible connections, where one may start from the flat connection on $\mathbb{R}^4$, then there do exist reducible connections defined on the whole of $\mathbb{R}^4$. (Our algebraically special connection is an example of such.) Since there are no non-trivial reducible connections on $S^4$, however, Uhlenbeck's theorem~\cite{U} implies that the curvature of such a connection cannot be $L^2$. (This property may also be checked directly from the explicit form of the connection.)
\end{remark}

\begin{remark}
Takasaki~\cite{T} has argued that, if we drop the reality conditions on self-dual connections and consider $\SLtwoC$ connections rather than $\SUtwo$ ones, then the group action generated by transformations of the form
\begin{equation}
\dot{J}(x, s) = \chi_{\infty}(x, \lambda) T(x, \lambda) \chi_{\infty}(x, \lambda)^{-1} \cdot J
\label{Takasaki}
\end{equation}
is transitive on the space of local solutions of the $\SLtwoC$ {\sdyme}. In the current context, we are explicitly  restricting ourselves (via the form of equation~\eqref{Jsymmetry}) to transformations that preserve the $\SUtwo$ nature of the connection, in which case there is no reason to believe that the group action should be transitive. In an analogous situation in the theory of harmonic maps into Lie groups, one can show that transformations analogous to~\eqref{Takasaki} map real extended harmonic maps to real harmonic maps if and only if the action is trivial~\cite[Proposition~3.4]{BG} (i.e. $g \cdot \Phi = \Phi$). It is, similarly, expected that transformations of the form~\eqref{Takasaki} will map {\SUtwo} connections to {\SUtwo} connections if and only if the connections coincide.
\end{remark}

\section{Harmonic maps of finite type}
\label{sec:HMFT}

It is clear from the discussion in the previous section that the reducible connections on a simply-connected, open subset $U \subseteq \mathbb{R}^4$ are a special case of a more general type of connection. In particular, given a map $T\colon U \times \Veps \rightarrow \SLtwoC$ satisfying the commutator condition~\eqref{TTstar}, then the patching matrix
\begin{equation}
G(x, z) = \exp \left( T(x, z) + T(x, z)^* \right)
\label{GFT}
\end{equation}
will generate a solution of the {\sdyme} on a subset of $U$.

The forms of condition~\eqref{TTstar} and the patching matrix~\eqref{GFT} are reminiscent of formulae that appear when one considers harmonic maps of finite type into Lie groups (see, e.g., \cite[Chapter~24]{Gu} and~\cite{BP1,BP2} for harmonic maps into $k$-symmetric spaces). Recall that, in this context, we consider Lie groups $\mathcal{G}$, $\mathcal{G}_1$, $\mathcal{G}_2$ such that $\mathcal{G} = \mathcal{G}_1 \cdot \mathcal{G}_2$ (in the sense that, given $g \in \mathcal{G}$, there exist unique $g_1 \in \mathcal{G}_1$, $g_2 \in \mathcal{G}_2$ such that $g = g_1 g_2$). At the Lie algebra level, we have a direct sum decomposition $\mathfrak{g} = \mathfrak{g}_1 + \mathfrak{g}_2$, and we denote the projections onto the two summands by $\pi_1$, $\pi_2$. In the case where $Ad_{\mathcal{G}_1} \mathfrak{g}_2 \subseteq \mathfrak{g}_2$, then various Lax flows on $\mathfrak{g}$ can be solved explicitly. In particular, let $\mathbf{J}_1$, $\mathbf{J}_2$ be invariant vector fields on $\mathfrak{g}$\footnote{i.e. $\mathbf{J}_1, \mathbf{J}_2\colon \mathfrak{g} \rightarrow \mathfrak{g}$ satisfy $\mathbf{J}_1(Ad_g \mathbf{v}) = Ad_g \mathbf{J}_1(\mathbf{v})$ for all $g \in \mathcal{G}$ and $\mathbf{v} \in \mathfrak{g}$ and similarly for $\mathbf{J}_2$} consider the Lax equations
\begin{subequations}
\begin{align}
\partial_s X(s, t) = \left[ X(s, t), \left( \pi_1 \circ \mathbf{J}_1 \right)(X(s, t)) \right],
\\
\partial_t X(s, t) = \left[ X(s, t), \left( \pi_1 \circ \mathbf{J}_2 \right)(X(s, t)) \right],
\end{align}\label{Lax}\end{subequations}
for a map $X\colon \mathbb{R}^2 \mapsto \mathfrak{g}$ with initial conditions
\[
X(0, 0) = \mathbf{V} \in \mathfrak{g}.
\]
These equations are compatible, and the solution to this problem may be written in the form
\[
X(s, t) = Ad_{F(s, t)^{-1}} \mathbf{V}
\]
where $F\colon \mathbb{R}^2 \rightarrow \mathcal{G}_1$ takes the form
\[
F(s, t) = \exp \left( s \left( \pi_1 \circ \mathbf{J}_1 \right)(\mathbf{V}) + t \left( \pi_1 \circ \mathbf{J}_2 \right)(\mathbf{V}) \right), \qquad
(s, t) \in \mathbb{R}^2.
\]
The connection with harmonic maps arises if we let $G$ be a compact Lie group and use the standard loop-group decompositions (see, e.g.,~\cite[Chapter~12]{Gu} and~\cite[Chapter~8]{PS}) to take $\mathcal{G} := \Lambda G^{\mathbb{C}}$, $\mathcal{G}_1 := \Omega G$, $\mathcal{G}_2 := \Lambda_+ G^{\mathbb{C}}$. If we now impose that the initial conditions for the Lax equation~\eqref{Lax} corresponds to an element of the loop group of $G$ (rather than $G^{\mathbb{C}}$) and that its Laurent expansion of which lies between degrees $-d$ and $d$:
\[
\mathbf{V} = \left[ \lambda \mapsto f(\lambda) \equiv \sum_{n=-d}^d \alpha_n \lambda^n \right] \in \Lambda G^{\mathbb{C}},
\]
then it turns out that the map $X$ also has Laurent expansion which lies between degrees $-d$ and $d$. Moreover, $F\colon \mathbb{R}^2 \rightarrow \Omega G$ is automatically the extended solution corresponding to a harmonic map $\varphi\colon \mathbb{R}^2 \rightarrow G$. (In particular, $\varphi(s, t) = \left. F(s, t) \right|_{\lambda = -1}$.) Such harmonic maps are of \emph{finite type}.

\vs
In the context of self-dual Yang--Mills connections, the analogue of harmonic maps of finite type for self-dual Yang--Mills fields would appear to be patching matrices of the form
\[
G(x, z) = \exp \Phi(x, z),
\]
where $\Phi\colon U \times \Veps \rightarrow \SLtwoC$ satisfies
\begin{itemize}
\item[1). ]
$\left( \partial_{\ubar} - z \partial_v \right) \Phi(x, z) = \left( \partial_{\vbar} + z \partial_u \right) \Phi(x, z) = 0$;
\item[2). ]
$\Phi^*(x, z) = \Phi(x, z)$
\item[3). ]
$\Phi(x, z)$ is analytic in $z$ for $z \in \Veps$ for some $\epsilon > 0$, and there exists $d \in \mathbb{N}_0$ such that $\Phi$ has a finite Laurent expansion of the form
\[
\Phi(x, z) = \sum_{n=-d}^d a_n(x) z^n, \qquad (x, z) \in U \times \Veps,
\]
for some $a_i\colon U \rightarrow \mathfrak{g}^{\mathbb{C}}$, $i = -d, \dots, d$ on this set.
\end{itemize}
In particular, fixing a point $p \in U$ then the finite Laurent expansion at $p$ is analogous to the initial condition $\mathbf{V}$ having finite Laurent expansion. Moreover, Condition~1) above is then the analogue of the Lax equations~\eqref{Lax} satisfied by the map $X$ in the harmonic map case.

\begin{definition}
We will call a solution of the {\sdyme} for which there exists a patching matrix that satisfies the above criteria a \emph{self-dual connection of finite type}\footnote{Or, of \emph{type $d$}, when we wish to be more specific}.
\end{definition}

\begin{remark}
The conditions above imply that the maps $a_i$ satisfy the conditions
\begin{subequations}
\begin{align}
\partial_{\ubar} a_{-d} &= 0 & &\partial_{\ubar} a_{-d} = 0, & &
\\
\partial_{\ubar} a_{n+1} &= \partial_v a_n,
& &\partial_{\vbar} a_{n+1} = - \partial_u a_n, & & n = -d, \dots, d-1
\\
\partial_v a_d &= 0 & & \partial_u a_d = 0. & &
\end{align}\label{SDFT}\end{subequations}
For $d=0$, we deduce that $G$ is constant, and therefore the corresponding self-dual connection $\mathbf{A}$ is flat. For $d \ge 1$, the algebraic condition~\eqref{TTstar} imposes non-trivial restrictions on the coefficients $a_i$. Note that our algebraically special connection is a self-dual connection of type $1$.
\end{remark}

\begin{remark}
Let $\mathbf{A}$ be a reducible connection defined by a harmonic function $a$ as in~\eqref{reducibleconnection}. Letting $a_0(x) := a(x)$ then $\mathbf{A}$ is a self-dual connection of finite type if and only if there exists $d>0$ and functions $a_{-d}, \dots, a_d$ such that equations~\eqref{SDFT} hold. In particular, this condition implies that
\[
\frac{\partial^d a}{\partial \ubar^r \partial \vbar^s} = 0, \mbox{ for all $r, s$ such that $r + s = d$}.
\]
Therefore, $\Phi$ is necessarily a polynomial of degree less than or equal to $d$ in $(u, \ubar, v, \vbar)$. As such, the space of self-dual connections of type $d$ is necessarily finite-dimensional.
\end{remark}

\begin{remark}
The most restrictive case is when we impose that $G$ splits in the form~\eqref{patching} with
\begin{align*}
\Psi_0(x, z) &= \exp \left( \frac{a_0(x)}{2} + \sum_{n=1}^d a_n(x) z^n \right),
\\
\Psi_{\infty}(x, z) &= \exp \left( - \frac{a_0(x)}{2} -  \sum_{n=-d}^{-1} a_n(x) z^n \right).
\end{align*}
Such a splitting only occurs if $\left[ a_i(x), a_j(x) \right] = 0$, for all $i, j = -d, \dots, d$. Since {\SLtwoC} is of rank one, it follows that there exists a constant element $\alpha \in \SLtwoC$ such that $a_i(x) = \varphi_i(x) \alpha$, for functions $\varphi_i\colon U \mapsto \mathbb{C}$. A change of basis (rotating so that $\alpha \mapsto \tau_3$) implies that such patching matrices give rise to reducible connections when $a_0$ is real.
\end{remark}

\begin{remark}
One of the main differences between the integrable systems approach to harmonic maps and the {\sdyme} is the form of the symmetry group action on the solutions. In the case of harmonic map equations from a domain $X \subseteq \mathbb{R}^2$ to a Lie group $G$, one interprets the harmonic map equations as implying the existence of a holomorphic map $E\colon X \rightarrow \Omega G$ into the based loop group of $G$. The \lq\lq dressing action\rq\rq\ on the space of harmonic maps is then induced by the action of various groups on the group $\Omega G$~\cite{GO,UhlenbeckJDG}. In particular, the symmetry group acts only on the space where the map $E$ takes its values, rather than on the map $E$ itself. In the case of the {\sdyme}, the object of study is the patching matrix $G\colon U \times \Veps \rightarrow \sltwoC$, and the group action~\eqref{crane} acts non-trivially on the map $G$. This difference is the main issue that makes the case of {\sdyme} more complicated. As remarked earlier, the particular form of the group action~\eqref{crane} implies that many of the techniques used to study orbits in the harmonic map case have no direct analogue in the self-dual Yang--Mills case.
\end{remark}

\section{Final remarks}
Our main result is that reducible connections that satisfy the {\sdyme} on simply-connected, open subsets of $\mathbb{R}^4$ lie in the orbit of the flat connection under the action of the non-local symmetry group of these equations found in~\cite{CGSW,CGW,CW}. In particular, such connections lie within a larger class of solutions, discussed in Section~\ref{flatorbit}, defined by a holomorphic function $T(x, z)$ with the property that $\left[ T(x, z), T^*(x, z) \right] = 0$. This condition defines a class of solutions of the {\sdyme} that seem quite natural from the integrable systems point of view, and suggests a connection with the theory of harmonic maps of finite type. Whether the analogy with such harmonic maps may be extended, and techniques developed in, for example~\cite{BP1,BP2}, may be adapted to the study of our class of self-dual connections, is under investigation.

It is clear that the work here (and in the sister paper~\cite{sdym}) may be extended in several ways. The investigation of the symmetry group on the one-instanton moduli space on the four-manifold {\cptwo} would be of particular interest, since, in this case, the standard reducible connection is also $L^2$ so we have reducible and irreducible connections in the same moduli space. Such an investigation would yield further information concerning the different behaviour of reducible connections studied here and the instanton connections studied in~\cite{sdym} under the symmetry group. In a different direction, given that the reducible connections and the class discussed in Section~\ref{flatorbit} seem quite a natural family of solutions to investigate from the point of view of integrable systems, it would be of interest to investigate whether there are similar families of self-dual Ricci-flat four-manifolds (for example, those with algebraically special self-dual Weyl tensor) that arise naturally from the symmetries of, for example, Plebanski's equations~\cite{JFP}.

We should also point out that we have exclusively considered the {\sdyme} on Riemannian manifolds, due to the original motivation of Donaldson theory. It is more usual to investigate the integrable systems aspects of the {\sdyme} equations on manifolds of signature $(--++)$ (see, e.g., \cite{MW} for an extensive treatment of this topic). It would be of interest to investigate the action of symmetries on, for example, reducible connections in the case of signature $(--++)$.

\vs
Viewed in conjunction with the results of the companion paper~\cite{sdym}, where instanton solutions of the {\sdyme} were investigated, it appears that the action of the non-local symmetry group on the space of solutions of the {\sdyme} is quite different in the two cases. In the case of instanton moduli spaces, evidence was found that the orbits of the symmetry group that preserve the $L^2$ nature of the curvature of the connection are rather small. In the present case, however, all reducible connections are contained in a single orbit. It appears that the distinction between instanton connections and reducible connections for the {\sdyme} are, in this sense, similar to the distinction between harmonic maps of finite uniton number~\cite{UhlenbeckJDG} and harmonic maps of finite type. Since the original motivation for the current work (and~\cite{sdym}) was to investigate connections between integrable systems theory and Donaldson's use of the {\sdyme} in connection with four-dimensional topology~\cite{Don1}, it is rather striking that the behaviour of reducible connections and irreducible connections should be so different from the integrable systems point of view. Whether these results point to a deeper relationship between integrable systems theory and topological field theory would certainly seem worthy of further investigation.

\appendix
\section{Constant group action}
\label{orbits}

The group action $G(x, z) \mapsto h(x, z) G(x, z) h^*(x, z)$ is a little unusual. In order to gain some insight into this action, we consider some similar actions on simpler groups, analogous to the case where $G$ and $h$ are constant.

\subsection{$\SLtwoR$}
Consider the action of $\SLtwoR$ on itself given by
\[
\SLtwoR \times \SLtwoR \rightarrow \SLtwoR; \qquad \left( h, g \right) \mapsto h \cdot g := h g h^t,
\]
where ${}^t$ denotes transpose. The subgroup $\PSLtwoR \cong \mathrm{SO}^0_{2, 1}$ acts effectively. We decompose $g$ into symmetric and skew-symmetric parts
\[
g = U + \alpha \epsilon,
\]
where $U$ is symmetric, $\alpha \in \mathbb{R}$ and $\epsilon = \begin{pmatrix} 0 &1 \\ -1 &0 \end{pmatrix}$. It then follows that $\alpha$ is invariant under the action of $h$ and $U$ transforms according to $U \mapsto h \cdot U := h U h^t$. The fact that $g$ lies in {\SLtwoC} implies that
\[
\det U = 1 - \alpha^2.
\]
Writing
\[
U = \begin{pmatrix} t+x &y \\ y &t-x \end{pmatrix}
\]
then $\det U = - \| \mathbf{u} \|^2$, where $\mathbf{u} = (t, x, y) \in \mathbb{R}^{2, 1}$. The orbits of the group action are then parametrised by $\alpha \in \mathbb{R}$ and consist of vectors $\mathbf{u} \in \mathbb{R}^{2, 1}$ with
\[
\| \mathbf{u} \|^2 = \alpha^2 - 1.
\]
Since the restriction on $\mathbf{u}$ is insensitive to the sign of $\alpha$, we consider the orbits for $\alpha \ge 0$:
\vs
\noindent{\fbox{$\alpha = 0$}} Here there are two orbits consisting of symmetric elements of $\SLtwoR$. We have $\| \mathbf{u} \|^2 = -1$, so $\mathbf{u}$ lies on the two-sheeted hyperboloid in $\mathbb{R}^{2, 1}$, with each sheet constituting an orbit. In this case, giving the orbits the induced hyperbolic metric, the group {\SLtwoR} acts isometrically.

\vs
\noindent{\fbox{$0 < \alpha < 1$}} In this case, there are two orbits i.e. the two components of the hyperboloid  $\| \mathbf{u} \|^2 = -1 + \alpha^2 \in (-1, 0)$ in $\mathbb{R}^{2, 1}$. Again, the group {\SLtwoR} acts isometrically with respect to the induced metric on the orbits.

\vs
\noindent{\fbox{$\alpha = 1$}} In this case, $\| \mathbf{u} \|^2 = 0$, so either $\mathbf{u} = 0$ or $\mathbf{u}$ is null. In the first case, the group orbit consists of the point $\mathbf{u} = 0$. In the latter case, the future and past null-cones of the origin give two distinct group orbits.

\vs
\noindent{\fbox{$\alpha > 1$}} In this case, there is one orbit, consisting of the one-sheeted hyperboloid $\| \mathbf{u} \|^2 = -1 + \alpha^2 \in (1, \infty)$ in $\mathbb{R}^{2, 1}$. In this case, {\SLtwoR} acts isometrically with respect to the induced (Lorentzian) metric on the orbit.

\subsection{$\SLtwoC$}
In particular, we consider the action of $\SLtwoC$ on itself given by
\begin{equation}
\SLtwoC \times \SLtwoC \rightarrow \SLtwoC; \qquad \left( h, g \right) \mapsto h \cdot g := h g h^*,
\label{hghstar}
\end{equation}
where ${}^*$ denotes complex-conjugate transpose. The subgroup $\PSLtwoC \cong \mathrm{SO}_{3, 1}$ acts effectively. It is straightforward to check that
\[
I[g] := \frac{1}{2} \mathrm{tr} \left( g \left( g^{-1} \right)^{\dagger} \right)
\]
is invariant under the transformation $g \mapsto h \cdot g$. It is useful to split $g$ into Hermitian and skew-Hermitian parts
\[
g = U + V \quad \mbox{where} \quad U^* = U, \quad V^* = - V,
\]
and to note that this decomposition is preserved under~\eqref{hghstar} (i.e. $\left( h \cdot g \right)_U = h \cdot g_U$, etc.) A straightforward calculation implies that
\[
\det U = \frac{1}{2} \left( I + 1 \right).
\]
In particular, letting $U = t + x \tau_1 + y \tau_2 + z \tau_3$ and $\mathbf{u} := (t, x, y, z) \in \mathbb{R}^{3, 1}$, we deduce that
\begin{equation}
\| \mathbf{u} \|^2 = - \frac{1}{2} \left( I + 1 \right).
\label{usquared}
\end{equation}
Similarly, letting $V = i \left( T + X \tau_1 + Y \tau_2 + Z \tau_3 \right)$ and $\mathbf{v} := (T, X, Y, Z) \in \mathbb{R}^{3, 1}$, we find that
\[
\det g = - \| \mathbf{u} \|^2 - 2 i \langle \mathbf{u}, \mathbf{v} \rangle + \| \mathbf{v} \|^2.
\]
Since $g \in \SLtwoC$, we therefore deduce that
\begin{align}
\| \mathbf{v} \|^2 = - \frac{1}{2} \left( I - 1 \right),
\label{vsquared}
\\
\langle \mathbf{u}, \mathbf{v} \rangle = 0.
\label{udotv}
\end{align}

If $I > 1$ then $\mathbf{u}$ and $\mathbf{v}$ would be non-zero, orthogonal, time-like vectors in $\mathbb{R}^{3, 1}$. Since this cannot occur, we deduce that $I \le 1$. We investigate the distinct cases separately:

\vs
\noindent{\fbox{$I=1$}} In this case, $\| \mathbf{u} \|^2 = - 1$ and $\| \mathbf{v} \|^2 = 0$. As such, $\mathbf{u}$ lies on the two-sheeted hyperboloid in $\mathbb{R}^{3, 1}$. The condition that $\| \mathbf{v} \|^2 = 0$ and is orthogonal to the non-zero, time-like vector $\mathbf{u}$ then implies that $\mathbf{v} = 0$. As such, we have two distinct orbits, corresponding to the two components of the two-sheeted hyperboloid. These orbits correspond to the Hermitian elements of {\SLtwoC}. Giving the orbits the hyperbolic metric induced from $\mathbb{R}^{3, 1}$, the group {\SLtwoC} acts isometrically.

\vs
Note that for $I < 1$, the vector $\mathbf{v}$ is always space-like, and lies on the one-sheeted hyperboloid $\Sigma_I := \left\{ \mathbf{w} \in \mathbb{R}^{3, 1}: \| \mathbf{w} \|^2 = \frac{1}{2}(1-I) \right\}$ in $\mathbb{R}^{3, 1}$.

\vs
\noindent{\fbox{$-1 < I < 1$}} We have $\| \mathbf{u} \|^2 = - (I+1)/2$ in $\mathbb{R}^{3, 1}$. Since $\mathbf{u}$ is orthogonal to $\mathbf{v}$, we may view $\mathbf{u}$ as a time-like vector of length $\sqrt{(I+1)/2}$ lying in the two-sheeted hyperboloid in $T_{\mathbf{v}} \Sigma_I$. As such, we have two distinct orbits, consisting of the two components of the two-sheeted hyperboloid bundle in $T \Sigma_I$. In this case, the group action on the orbit is the action induced by the isometric action of {\SLtwoR} on the Lorentzian metric induced on the one-sheeted hyperboloid.

Alternatively, we may view $\mathbf{u}$ as a time-like vector lying on the two-sheeted hyperboloid $\| \mathbf{u} \|^2 = - (I+1)/2$ in $\mathbb{R}^{3, 1}$. We then view $\mathbf{v}$ as a tangent vector to the hyperboloid of length $\frac{1}{\sqrt{2}} \left( 1 + |I| \right)$. Therefore the orbits in this case may be identified with the radius $\frac{1}{\sqrt{2}} \left( 1 + |I| \right)$ sphere sub-bundle of the tangent bundle of the hyperbolic space of radius $\frac{1}{\sqrt{2}} \left( I + 1 \right)$. Again, there are two orbits corresponding to the two components of the hyperboloid. In this case, the group action on the orbit is the action induced by the isometric action of {\SLtwoR} on the induced metric on the two-sheeted hyperboloid.

\vs
\noindent{\fbox{$I = -1$}} In this case, $\| \mathbf{u} \|^2 = 0$ and $\| \mathbf{v} \|^2 = 1$. As such, we may view $\mathbf{u}$ as a null vector in $T_{\mathbf{v}} \Sigma_{-1}$. There are then three distinct orbits. The first consists of $\mathbf{u} = 0$, and is simply the hyperboloid $\Sigma_{-1}$. This orbit consists of the skew-Hermitian elements of {\SLtwoC}. The other orbits consist of the sub-bundle of $T \Sigma_{-1}$ consisting of the past and future null cone of the origin in each tangent space.

\vs
\noindent{\fbox{$I < -1$}} In this case, $\| \mathbf{u} \|^2 = (|I|+1)/2 > 0$ in $\mathbb{R}^{3, 1}$. Therefore there is one orbit, consisting of the one-sheeted hyperboloid sub-bundle of $T \Sigma_1$. The {\SLtwoC} action is that induced by the isometric action on the induced Lorentzian metric on $\Sigma_I$.

\def\cprime{$'$}

\end{document}